\newtheorem{theorem}{Theorem}[section]
\newtheorem{lemma}[theorem]{Lemma}
\newtheorem{corollary}[theorem]{Corollary}
\newtheorem{definition}[theorem]{Definition}
\newtheorem{conjecture}[theorem]{Conjecture}
\newenvironment{proof}{{\bf Proof:\ }}{\hfill$\Box$\medskip}
\newcommand{\Exp}{\mathbb{E}}
\renewcommand{\Pr}{\mathbb{P}}
\newcommand{\ee}{{\rm e}}
\newcommand{\ignore}[1]{}
\newcommand{\remove}[1]{}
\newcommand{\reals}{\hbox{$\rlap{\rm I} \> \kern-.2mm{\rm R}$}}
\newcommand{\eij}{\{i,j\}}
\newcommand{\xij}{x(\{i,j\})}
\newcommand{\C}{{\mathcal C}}
\newcommand\blue[1]{#1}
\begin{document}

\title{Random $k$-out subgraph leaves only $O(n/k)$ inter-component edges
\thanks{In connection with this work, all authors had some degree of support
from Thorup's Investigator Grant 16582, Basic Algorithms Research
Copenhagen (BARC), from the VILLUM Foundation. All were at BARC when
the work was started, Valerie King as a short term visitor, Uri Zwick
and Or Zamir as long term visitors, and Jacob Holm and Mikkel Thorup
were there the whole time.}}





\author{
Jacob Holm\thanks{
BARC, Department of Computer Science, University of Copenhagen, Denmark, {\tt jaho@di.ku.dk, mikkel2thorup@gmail.com}}
 \and Valerie King\thanks{Department of Computer Science,
University of Victoria, Canada. E-mail: {\tt val@uvic.ca}. Research supported by Natural Science and Engineering Research Council of Canada (NSERC) Discovery Grant.} \and Mikkel Thorup${}^*$ \and Or Zamir\thanks{Blavatnik School of
Computer Science, Tel Aviv University,
  Israel. E-mail: {\tt orzamir@mail.tau.ac.il,
    zwick@tau.ac.il}. Research supported by a grant from The Blavatnik Computer Science Research Fund.} \and
Uri Zwick${}^\ddagger$}

\date{}

\maketitle

\begin{abstract}\noindent Each vertex of an arbitrary simple graph on~$n$ vertices chooses $k$ random incident edges. What is the expected number of edges in the original graph that connect different connected components of the sampled subgraph? We prove that the answer is $O(n/k)$, when $k\ge c\log n$, for some large enough~$c$. We conjecture that the same holds for smaller values of~$k$, possibly for any $k\ge 2$. Such a result is best possible for any $k\ge 2$. As an application, we use this sampling result to obtain a one-way communication protocol with \emph{private} randomness for finding a spanning forest of a graph in which each vertex sends only~${O}(\sqrt{n}\log n)$ bits to a referee.
\end{abstract}



\section{Introduction}\label{S-intro}

Sampling edges is a natural way of trying to infer properties of a graph when accessing the whole graph is either not possible or too expensive.
We consider a scenario in which each vertex is resource constrained and can only sample $k$ of its incident edges, or all edges if its degree is at most~$k$. This corresponds to the \emph{$k$-out} model that was mainly studied in the context of the \emph{complete} graph. (See references and discussion in Section~\ref{sub-kout-prev}.) Here we are interested in properties of this sampling model when applied to \emph{arbitrary} simple graphs.

Let $G=(V,E)$ be an arbitrary simple graph on $n$ vertices. Each vertex $v$ independently picks $\min\{\deg(v),k\}$ random adjacent edges. Let $G'=(V,E')$ be the resulting subgraph. How many edges of~$G$ connect different connected components of~$G'$? (These edges are referred to as \emph{inter-component} edges.) We prove that for $k\ge c\log n$, for a sufficiently large constant~$c$, the expected number of such edges is $O(n/k)$. We conjecture that the same result also holds for much smaller values of~$k$, possibly even for every $k\ge 2$. The statement is false for $k=1$. No such result was obtained or conjectured before, for any value of~$k$. Simple examples show that this result is best possible for any $k\ge 2$.
The proof we provide is fairly intricate. Our result also sheds light on other sampling models.


Given its generality, we hope that our new sampling theorem would find many applications. As a first such application, we show how the sampling theorem, together with other ideas, can be used to obtain a one-way communication protocol with \emph{private} randomness for finding a spanning forest of an input graph in which each vertex sends only~$\tilde{O}(\sqrt{n})$ bits to a referee. No private randomness protocol in which each vertex sends only $o(n)$ bits was known before.


\subsection{Our results}\label{sub-k}

We begin with a formal definition of the $k$-out model.

\begin{definition}
[Random $k$-out subgraphs]
\label{D-k} Let $G=(V,E)$ be a simple undirected graph. Sample a subset $S\subseteq E$ of the edges by the following process: Each vertex independently chooses $\min\{k,\deg(v)\}$ of its incident edges, each subset of this size is equally likely. An edge is included in~$S$ if and only if it was chosen by at least one of its endpoints. The subgraph $G'=(V,S)$ is said to be a random $k$-out subgraph of~$G$.
\end{definition}

In the above definition, we treat each undirected edge $\{u,v\}\in E$ as two directed edges $(u,v)$ and $(v,u)$. Each one of these directed edges is sampled independently. At the end, the direction of the sampled edges is ignored and duplicate edges are removed.

Although each vertex chooses only $k$ adjacent edges, the resulting subgraph is \emph{not} necessarily of maximum degree~$k$, as an edge may be chosen by either of its endpoints. In particular, if $G$ is a star and $k\ge 1$, then~$G'$ is always the original graph, as each leaf must choose the edge connecting it to the center. The choices made by the center are irrelevant. However, for any graph $G$, the resulting subgraph~$G'$ is always \emph{$k$-degenerate}, i.e., it can be oriented so that the outdegree of each vertex is at most~$k$. We just keep the orientation of the sampled directed edges.) As a consequence the \emph{arboricity} of $G'$ is also at most~$k$.

The main result of this paper is:

\begin{theorem}[Main Theorem for $k$-out]\label{T-main}
  Let $G$ be an arbitrary undirected $n$-vertex graph and let $k\ge c\log n$, where~$c$ is a large enough constant. Let $G'$ be a random $k$-out subgraph of~$G$. 
  Then the expected number of edges in~$G$ that connect different connected components of~$G'$ is $O(n/k)$.
\end{theorem}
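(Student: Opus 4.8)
The plan is to first dispose of low-degree vertices, then reduce to controlling ``clean cuts'' inside the high-degree part, and finally to prove a quantitatively tight connectivity statement for random $k$-out on an arbitrary dense-ish graph; the last step is where the real work lies.

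\textbf{Reductions and elementary facts.} Let $L=\{v:\deg(v)<k\}$ and $H=\{v:\deg(v)\ge k\}$. Each $v\in L$ picks \emph{all} of its incident edges, so every edge of $G$ with an endpoint in $L$ lies in $G'$; in particular every inter-component edge has both endpoints in $H$. I would push this further and contract the connected pieces formed by these ``forced'' edges, reducing to the case in which essentially every vertex has degree $\ge k$ (being careful that contraction never forces us to count more edges, since forced edges are never inter-component). A second elementary observation: if a component $C$ of $G'$ contains a vertex of degree $\ge k$ then, as its $k$ picked neighbors are distinct in a simple graph, $|C|\ge k+1$; hence at most $n/(k+1)$ components can have any boundary at all. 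This gives the right order for the \emph{number} of relevant components, but not yet for the number of inter-component edges, since two components may be joined by many edges of~$G$.

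\textbf{Core argument.} Writing $\partial C$ for the set of edges of $G$ leaving a component $C$, the quantity to bound is $\Exp\bigl[\sum_C |\partial C|\bigr]$, which is twice the expected number of inter-component edges. For each $C$, $\partial C$ is a cut of $G$ that is \emph{clean}, i.e.\ contains no edge of $G'$: every edge $\{u,x\}$ across it is missed by both $u$ and~$x$, both of which lie in $H$. The naive per-edge estimate $\Pr[\{u,x\}\text{ inter-component}]\le (1-k/\deg u)(1-k/\deg x)$ is far too weak: summed over all edges inside $H$ it can be $\Theta(|E|)$ (e.g.\ when $H$ is a disjoint union of $(2k)$-cliques), whereas the truth there is $\Theta(n/k)$, because it ignores that $u$ and $x$ are almost surely reconnected through other $G'$-paths. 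So the heart of the proof is a tight connectivity statement for random $k$-out. One natural route is an expander-decomposition-type partition of (the high-degree part of) $G$ into clusters that are individually expanding enough that, since $k\ge c\log n$, a Chernoff bound over all internal cuts shows each cluster lands whp inside a single component of $G'$ and therefore contributes no inter-component edge, while the decomposition keeps the inter-cluster structure---and, crucially, the total ``degree slack'' $\sum_v(\deg v-k)$ that can be spent keeping edges clean---under control; the surviving inter-component edges are then (essentially) inter-cluster edges missed by both endpoints, bounded by an averaging argument that discounts each unit of slack by a $1/k$-type factor, yielding $O(n/k)$. An alternative is a direct two-phase exploration: reveal $k/2$ picks per vertex, note every high-degree vertex already sits in a blob of size $\ge k/2$ (so $\le 2n/k$ blobs), then use the remaining $k/2$ picks to show that a pair of blobs joined by $m$ edges of $G$ stays separate only with probability exponentially small in (roughly) $\sum_{\text{those }m\text{ edges}}(k/\deg u + k/\deg x)$, so each blob-pair contributes $O(1)$ in expectation after this weighting.

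\textbf{Main obstacle.} The difficulty is hitting $O(n/k)$ exactly, rather than $O((n/k)\,\mathrm{polylog}\,n)$ or a bound scaling with $|E|$. The obstruction is the arbitrary cut structure of $G$: it may contain a deeply nested family of small cuts---as in the tight example of a path of $\Theta(k)$-vertex cliques, where each clique is a component and there are $\Theta(n/k)$ inter-component edges---and it mixes vertices of widely different degrees (a degree-$(k+1)$ vertex misses exactly one edge but with probability $\approx 1/k$, whereas a degree-$d$ vertex with $d\gg k$ misses $d-k$ edges but each only with probability $\approx k/d$), so a single clean bound at ``one scale'' will not suffice. I expect the actual argument must iterate the decomposition or handle several degree-and-cut scales simultaneously, carrying a carefully designed potential or recursion through them, which is presumably the source of the intricacy the authors mention.
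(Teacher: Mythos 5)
Your setup and diagnosis are sound: you correctly observe that low-degree vertices force all their edges into $G'$, that every inter-component edge must have both endpoints of degree $\ge k$, that components with nonempty boundary therefore have size $\ge k+1$ (so at most $n/(k+1)$ of them), and that the per-edge estimate $(1-k/\deg u)(1-k/\deg x)$ is hopeless because it ignores reconnection through other $G'$-paths. You also correctly flag the central obstacle: hitting $O(n/k)$ exactly across nested cuts and widely varying degrees. But at the point where the proof actually has to happen, your proposal offers only two sketches, and neither one gets there.

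Route~B (reveal $k/2$ picks, group into $\le 2n/k$ blobs of size $\ge k/2$, then argue that a pair of blobs joined by $m$ edges stays separate with probability $\exp\bigl(-\Theta(\sum(k/\deg u + k/\deg x))\bigr)$, ``so each blob-pair contributes $O(1)$ in expectation'') does not yield the claimed bound. Even granting the exponential estimate, the expected number of edges surviving between a single blob-pair is $m\cdot\exp(-W)$ with $W\le 2m$ and $W\ge 2mk/\Delta$; this is maximized near $m\approx\Delta/(2k)$ at value $\Theta(\Delta/k)$, which can be $\Theta(n/k)$ for a single pair, and there are up to $\Theta((n/k)^2)$ pairs. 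Summing ``$O(1)$ per pair'' also gives $O(n^2/k^2)$, not $O(n/k)$. Moreover the two phases are not independent once you condition on the phase-1 blobs, and you cannot simply treat blob boundaries as fixed cuts. Route~A (an expander-decomposition controlling a ``degree slack'' budget $\sum_v(\deg v - k)$) is stated too vaguely to evaluate, and there is no indication of how the decomposition keeps the inter-cluster contribution down to $O(n/k)$ rather than $O(|E|/k)$ or $O((n/k)\,\mathrm{polylog}\,n)$; your closing paragraph essentially concedes that the recursion/potential needed to make this work is not worked out. In short, the proposal correctly frames the problem but leaves the main lemma as a conjecture.

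For comparison, the paper does not use an expander decomposition or a two-phase reveal. It first passes to an \emph{expected} $k$-out model (each directed edge $(u,v)$ included independently with probability $k/\deg(u)$), shows via Chernoff that this is sandwiched between exact $k/2$-out and $2k$-out when $k\ge c\log n$, and then runs an adaptive round-by-round growth process: maintain $G_i\subseteq H_i$, repeatedly pick a \emph{smallest} not-yet-finished component $A$ and spend a per-vertex sampling budget of order $k/\sqrt{\deg_H(v)\cdot|A|}$ on edges leaving $A$; if no cut edge is hit, delete $\partial_H A$ and charge the deletion. The $O(n/k)$ bound then comes from a delicate amortization over ``density levels'' $\ell(A)=\lfloor\log_2(k|\partial_H A|/|A|)\rfloor$, a trimming step that keeps $\deg_H\ge\frac13\deg_G$, and a case split into four types of growth steps, each charged to vertices at rate $O(1/k)$ over their lifetime. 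That specific choice of sampling budget (with $\sqrt{|A|}$ in the denominator, so a vertex's total spend over its doubling sequence of components telescopes) together with the density-level potential is exactly the ``several scales simultaneously with a carefully designed potential'' mechanism you predicted would be needed, but did not supply.
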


It is easy to see that the theorem is best possible for any $k\ge 2$ and $n\ge 3k$. Let $G=(V,E)$ be a graph composed of two cliques of size $\frac{n}{2}$, connected by a matching of size $\frac{n}{k}$.
With probability at least $(1-\frac{2k}{n})^{\frac{2n}{k}} \ge (1-\frac{2}{3})^{2\cdot 3} = 3^{-6}$, no edge from the matching is chosen, in which case all the $\frac{n}{k}$ edges of the matching are inter-component, i.e., connect different connected components of~$G'$.

Another example, of a very different nature, that shows that Theorem~\ref{T-main} is best possible is the following. Let $T$ be an arbitrary tree on $\frac{n}{2k}\ge 2$ vertices. Form $G$ by connecting each vertex $u$ of~$T$ to $2k-1$ new leaves, making $\deg(u)\geq 2k$. Each original tree edge $(u,v)$ has probability at least $1-\frac{k}{\deg(u)}\ge\frac{1}{2}$ of not being chosen by $u$, and (independent) probability at least $1-\frac{k}{\deg(v)}\ge\frac{1}{2}$ of not being chosen by $v$. Thus the probability of $(u,v)$ not being in $G'$ is $(1-\frac{k}{\deg(u)})\cdot(1-\frac{k}{\deg(v)})\ge\frac{1}{4}$.
The expected number of edges of~$T$ that connect different connected components of $G'$ is therefore at least $(\frac{n}{2k}-1)\cdot\frac{1}{4}$, which is $\Omega(\frac{n}{k})$.

It follows immediately from Theorem~\ref{T-main} that there is a constant $b$ such that the probability that the number of inter-component edges is greater than $\ell\cdot bn/k$ is at most $2^{-\ell}$, for every $\ell\ge 1$, and this tail bound is tight. (See Corollary~\ref{C-tail}.)

We conjecture that Theorem~\ref{T-main} holds whenever $k=\Omega(1)$, and possibly even for every $k\ge 2$.

\begin{conjecture}[Conjecture for $k$-out]
  Let $G$ be an arbitrary undirected $n$-vertex graph and let $k\ge c$, where~$c$ is a large enough constant. Let $G'$ be a random $k$-out subgraph of~$G$. 
  Then, the expected number of edges in~$G$ that connect different connected components of~$G'$ is $O(n/k)$.
\end{conjecture}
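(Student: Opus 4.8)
The natural route is to revisit the proof of Theorem~\ref{T-main} and replace every place where the hypothesis $k\ge c\log n$ is used by an argument that survives when $k$ is merely a large constant. In the regime $k\ge c\log n$ the logarithmic factor is almost certainly spent on concentration: one argues that every vertex whose degree exceeds some threshold $T=T(k)$ is, with probability $1-1/\mathrm{poly}(n)$, absorbed into a large ``core'' component, and then takes a union bound over the $n$ vertices. For constant $k$ this Chernoff-plus-union-bound step is false vertex by vertex, so the first task is to recast ``every vertex behaves well whp'' as ``the expected total weight of badly-behaved vertices is small'', i.e.\ to prove directly that
\[
\Exp\Bigl[\ \sum_{v}\,|\partial_v|\ \Bigr]=O(n/k),
\qquad
\partial_v:=\{\,w\sim_G v:\ w\text{ lies in a different component of }G'\text{ than }v\,\},
\]
without ever conditioning on a global good event, since $\sum_v|\partial_v|$ is exactly twice the number of inter-component edges.

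One first disposes of the easy contributions. A vertex with $\deg(v)\le k$ contributes nothing, since it picks \emph{all} its incident edges and hence shares a component with every neighbour. A vertex with $\deg(v)\ge Tk$ for a suitable constant $T$ is still amenable to a base-$k$ Chernoff bound: the probability that such a vertex fails to keep $\Omega(k)$ of its neighbours in a single component is $\exp(-\Omega(k))$, and one would try to absorb the resulting $\exp(-\Omega(k))\cdot\deg(v)$ expected loss into the budget, using that there are at most $2m/(Tk)$ such vertices and $\sum_v\deg(v)=2m$. The delicate case is the \emph{critical window} $k<\deg(v)=O(k)$: here a vertex retains only $\Theta(1)$ of its neighbours under its own choices, connection to a large component is governed by a barely-supercritical exploration process, and the failure probability is a constant rather than $1/\mathrm{poly}(n)$. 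For these vertices one must bound the \emph{expected} number that end up in ``small'' components and then charge their boundary edges --- via a second-moment estimate on the number of such vertices, or a branching-process analysis robust to constant branching ratios.

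The main obstacle is precisely this critical window on the hard instances --- those on which the extremal $\Theta(n/k)$ bound is nearly attained everywhere, such as the tree-of-high-degree-stars example above. There a constant fraction of the ``skeleton'' edges are genuinely dropped, failures at neighbouring skeleton vertices are heavily correlated, and there is no $\log n$ slack to absorb the dependence. The plan would be to prove an \emph{isoperimetry-of-the-failure-set} lemma: if a set $F$ of vertices is ``closed off'' in $G'$ (every $G'$-edge leaving $F$ is absent, which is what forces $\partial$-edges), then $F$ must exhibit strong internal expansion among the edges its vertices actually sampled, and a large such $F$ is exponentially unlikely; one then charges each inter-component edge to a ``witness'' --- a vertex of small $G'$-degree, or a small closed set --- whose expected number is $O(n/k)$. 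Making such a charging scheme work against the full correlation structure, with only a large constant $k$ to play with, is exactly the step the current proof sidesteps by taking $k=\Omega(\log n)$, and I expect it to be the crux. Pushing all the way down to $k=2$ would require, in addition, a tight host-graph analogue of the classical fact that the $2$-out subgraph of $K_n$ is connected with high probability, now asserting that the giant component of an arbitrary graph's $k$-out sample already captures all but an $O(n/k)$-fraction of the boundary.
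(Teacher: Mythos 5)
The statement you are addressing is not proved in the paper at all: it is stated there as an open conjecture, and the paper's actual proof (Theorems~\ref{T-main} and~\ref{T-variant}) only covers $k\ge c\log n$. Your text is likewise not a proof but a program, and you say so yourself: the ``isoperimetry-of-the-failure-set'' lemma and the charging scheme for the critical window $k<\deg(v)=O(k)$ are exactly the missing ingredients, and nothing in the proposal establishes them. So the gap is the entire substance of the conjecture; reformulating the goal as $\Exp\bigl[\sum_v|\partial_v|\bigr]=O(n/k)$ and observing that degree-$\le k$ vertices contribute nothing does not advance beyond what is already implicit in the paper.

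Two concrete corrections to the plan itself. First, your diagnosis of where $\log n$ is spent does not match the paper's argument: there is no ``core component absorbed w.h.p.'' step. The proof works in the \emph{expected} $k$-out model via an adaptive multi-round growth-and-trimming process, and $k\ge c\log n$ enters in (a) the coupling between exact and expected $k$-out (Lemma~\ref{lem:same-expectation}, Chernoff plus a union bound over vertices), (b) the sampling budget for high-degree components (Lemma~\ref{smpls1}, which needs $k\ge 32\log n$ because a vertex can take part in up to $\log n$ growth steps), and (c) the type-1 bound (Lemma~\ref{chargeall}), where the per-vertex charge must beat a factor $\log n$ coming from the $\log n$ possible participations, requiring only $k=\omega((\log\log n)^2)$. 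Second, and more importantly, any proof for constant $k$ cannot proceed by tightening these estimates inside the expected $k$-out framework: the paper shows that Theorem~\ref{T-variant} is genuinely \emph{false} for $k<\tfrac12\log n$ (already on $K_n$, where expected $k$-out is $G(n,p)$ with isolated vertices), so the exact $k$-out model --- with its guarantee that every vertex keeps $\min\{k,\deg(v)\}$ edges --- must be exploited directly, which is precisely the dependence structure your sketch defers. Until a lemma of the kind you describe is actually proved, the conjecture remains open, as the paper states.
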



A closely related sampling model, in which we do most of the work, is the following:

\begin{definition}[Random \emph{expected} $k$-out subgraphs]\label{D-kdegv}
Let $G=(V,E)$ be a simple undirected graph. Sample a subset
$S\subseteq E$ of the edges by the following process: Each vertex
samples each one of its incident edges independently with probability
$k/\max\{k,\deg(v)\}$. Thus, each vertex of degree at least~$k$ samples an expected number of~$k$ edges.
An edge is included in~$S$ if and only if it
was sampled by at least one of its endpoints. The subgraph $G'=(V,S)$
is said to be a random \emph{expected} $k$-out subgraph of~$G$.
\end{definition}


Let $e=\{u,v\}\in E$ be an edge. If either $u$ or~$v$ are of degree at most~$k$, then $e$ is always sampled. Otherwise, $e$ is sampled with probability $p_e=\frac{k}{\deg(u)}+\frac{k}{\deg(v)} - \frac{k}{\deg(u)}\cdot\frac{k}{\deg(v)}$.
Equivalently, if we view~$e$ as two directed edges $e'=(u,v)$ and $e''=(v,u)$, then $e'$ is sampled with probability $p_{e'}=\frac{k}{\deg(u)}$ and $e''$ is sampled with probability $p_{e''}=\frac{k}{\deg(v)}$. (Recall that the directions of the sampled directed edges are ignored.)
Choices made for different edges are completely independent.




We show below (Lemma~\ref{lem:same-expectation}) that when $k\ge c\log n$, for a sufficiently large constant $c$, the expected number of inter-component edges with respect to an expected $k$-out subgraph is essentially sandwiched between the corresponding expectations for (exact) $k/2$-out and (exact) $2k$-out subgraphs.

To prove Theorem~\ref{T-main} for $k\ge c\log n$, for a sufficiently large~$c$, it is thus sufficient to prove the following theorem, which we do in Section~\ref{S-proof}.

\begin{theorem}[Main Theorem for \emph{expected} $k$-out]\label{T-variant}
  Let $G$ be an arbitrary undirected $n$-vertex graph and let $k\ge c\log n$, where~$c$ is a large enough constant. Let $G'$ be a random expected $k$-out subgraph of~$G$. 
  Then, the expected number of edges in~$G$ that connect different connected components of~$G'$ is $O(n/k)$.
\end{theorem}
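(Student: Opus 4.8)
The plan is to prove everything directly in the expected $k$-out model of Theorem~\ref{T-variant}, using in an essential way that there the directed edge-samples are \emph{mutually independent}. Write $X$ for the number of edges of~$G$ that join different components of~$G'$. A first, cheap observation: if an edge $e=\{u,v\}$ has an endpoint of degree at most~$k$, that endpoint samples \emph{all} of its incident edges, so $e\in S$ and $e$ can never be inter-component. Hence only edges both of whose endpoints have degree $>k$ (the \emph{heavy} edges) contribute to~$X$, and in every vertex set~$A$ considered below we may and do assume that every vertex of~$A$ incident to an edge of the edge boundary $\partial A$ (the edges of~$G$ with exactly one endpoint in~$A$) is heavy; otherwise $\Pr[\,A\text{ is a connected component of }G'\,]=0$.

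The starting identity is
\[
\Exp[X]=\tfrac12\sum_{A\subseteq V}|\partial A|\cdot\Pr[\,A\text{ is a connected component of }G'\,],
\]
which holds because each inter-component edge lies on the boundary of exactly the two components that contain its endpoints. The event that $A$ is a connected component of $G'$ is the intersection of the event that $G'[A]$ is connected, which is measurable with respect to the samples of the directed edges with both endpoints in~$A$, and the event that no edge of $\partial A$ is sampled, which is measurable with respect to the samples of the directed edges meeting $\partial A$; these two families of directed edges are disjoint, so independence gives
\[
\Pr[\,A\text{ is a component}\,]=\Pr[\,G'[A]\text{ connected}\,]\cdot\!\!\prod_{\{x,y\}\in\partial A}\!\Big(1-\tfrac{k}{\deg x}\Big)\!\Big(1-\tfrac{k}{\deg y}\Big)\ \le\ \Pr[\,G'[A]\text{ connected}\,]\cdot\ee^{-k\Phi(A)},
\]
where $\Phi(A):=\sum_{\{x,y\}\in\partial A}\!\big(\tfrac1{\deg x}+\tfrac1{\deg y}\big)$ is the \emph{relative} edge boundary of~$A$ (all degrees appearing here exceed~$k$, by the previous paragraph).

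It therefore suffices to show $\sum_{A}|\partial A|\cdot\Pr[\,G'[A]\text{ connected}\,]\cdot\ee^{-k\Phi(A)}=O(n/k)$. For sets~$A$ whose relative boundary $\Phi(A)$ is at least a constant times~$|A|$, the factor $\ee^{-k\Phi(A)}$ by itself already beats the number $\binom{n}{|A|}\le n^{|A|}$ of such sets (this is precisely where the hypothesis $k\ge c\log n$ is used), and summing over~$|A|$ shows these sets contribute $O(1)$ in total. The remaining sets are \emph{cluster-like}: their relative boundary is much smaller than their size, and there can be exponentially many of them (think of all contiguous unions of blocks in a long chain of cliques). For these the boundary factor is useless; but a cluster with a small relative boundary must contain many internal sparse cuts, every one of which the sample has to bridge for $G'[A]$ to be connected, so the factor $\Pr[\,G'[A]\text{ connected}\,]$ is correspondingly small. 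We organize the bound over the cluster-like sets as an induction on the number of vertices: if~$G$ has a sufficiently sparse relative cut we split along it, bound the two sides recursively, and charge the few cut edges against the $O(n/k)$ budget; if it does not, the boundary estimate alone shows that $G'$ is connected except with negligible probability.

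I expect the interface between this connectivity estimate and the recursion to be the main obstacle, and the source of the promised intricacy: one needs an upper bound on $\Pr[\,G'[A]\text{ connected}\,]$, phrased in terms of the internal sparse-cut structure of~$A$, that is simultaneously strong enough to annihilate the (potentially exponentially many) cluster-like sets, compatible with the relative-boundary normalization $\Phi$ and with the fact that sampling budgets are governed by the original degrees in~$G$ rather than by induced degrees, and clean enough to be substituted back into the recursion. Once such an estimate is available, taking the constant~$c$ large enough makes every error term $O(n/k)$, and Theorem~\ref{T-variant} follows.
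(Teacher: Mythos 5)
Your setup is correct and genuinely different from the paper's. The identity $\Exp[X]=\tfrac12\sum_{A}|\partial A|\cdot\Pr[A\text{ is a component}]$, the use of independence to factor off $\ee^{-k\Phi(A)}$, and the reduction to heavy vertices are all sound, and the high-$\Phi$ case does close out exactly as you say, with $k\ge c\log n$ beating the $\binom{n}{|A|}$ count. The paper, by contrast, never attempts a static union bound over all subsets. Instead it reveals the sample gradually in rounds, maintaining a nested pair $G_i\subseteq H_i$, always growing the \emph{smallest} active component and trimming; the relevant sets $A$ are then only the nested components that actually arise, so there is no exponential family to sum over, and the bound is obtained by amortization (charging costs to vertices by ``density level'').

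The gap you flag is real and is not a detail. Your remaining task is to bound $\sum_A|\partial A|\cdot\Pr[G'[A]\text{ connected}]\cdot\ee^{-k\Phi(A)}$ over the cluster-like $A$, and neither ingredient you propose currently works. First, small relative boundary $\Phi(A)$ does \emph{not} force $\Pr[G'[A]\text{ connected}]$ to be small: take $A$ a clique on $\Theta(n)$ vertices with $\Theta(n/k)$ boundary edges; then $\Phi(A)=\Theta(1/k)$ and $G'[A]$ is connected with probability $1-o(1)$, so the connectivity factor gives you nothing. Second, the proposed recursion on sparse relative cuts has the usual divide-and-conquer loss: even granting that a cut with $\Phi(B)=O(1/k)$ has $|\partial B|=O(\min(|B|,n-|B|)/k)$ edges, charging cut edges to the smaller side at each split accumulates to $O(n\log n/k)$ over $\Theta(\log n)$ levels, which is exactly the paper's warm-up bound, not the claimed $O(n/k)$. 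Removing the extra $\log n$ is where the paper's real work lies: it tracks, per vertex, the \emph{smallest} density level $\ell(v)$ that vertex has ever seen, proves the geometric decay $cost(A)\le\frac{|B|}{|A|}cost(B)$ across nested components (Lemma~\ref{AB}), and charges type-2 steps at $2^{-\ell}/k$ per vertex so that the sum over levels telescopes to $O(1/k)$ per vertex rather than $O(\log n/k)$. Your proposal has no analogue of this level-telescoping, and without it the cluster-like case will not come in under budget. So as written this is an honest and well-motivated plan that reproduces the easy half of the argument and the warm-up rate, but the central difficulty is named rather than overcome.
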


The requirement $k\ge c\log n$ in Theorem~\ref{T-variant} is essential, and thus the bound in the theorem is best possible for the expected $k$-out model. Let $G=K_n$ be the complete graph on~$n$ vertices and let $k=c\ln n$. The expected $k$-out model is then equivalent to the classical $G(n,p)$ model with $p\sim \frac{2c\ln n}{n}$. The probability that a given vertex is isolated is then $(1-\frac{2c\ln n}{n})^{n-1}\approx n^{-2c}$ and the expected number of edges connecting different connected components is $\Omega(n^{2(1-c)})$. Thus, the claim of the theorem is false when $c<\frac{1}{2}$.

This also explains the difficulty of extending Theorem~\ref{T-main} to
the $k=o(\log n)$ regime.
It is conceivable that Theorem~\ref{T-main} holds for any $k\ge 2$.

As an interesting corollary of~Theorems~\ref{T-main} and~\ref{T-variant} we get:

\begin{corollary}[Random $k$-out subgraph of a $(cn/k)$-edge connected graph]\label{C-con}\sloppy Let $G$ be a $(cn/k)$-edge connected $n$-vertex graph and let $k\ge c\log n$, where $c$ is a large enough constant. Let $G'$ be a $k$-out or an expected $k$-out random subgraph of~$G$. Then $G'$ is \emph{connected} with probability at least $1/2$.
\end{corollary}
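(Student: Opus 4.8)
The plan is to derive Corollary~\ref{C-con} directly from Theorem~\ref{T-main} (respectively Theorem~\ref{T-variant}) by a first-moment / Markov argument. Let $X$ denote the number of inter-component edges of $G$ with respect to $G'$, i.e.\ the number of edges $\{u,v\}\in E$ whose endpoints lie in different connected components of $G'$. By the relevant main theorem, $\Exp[X]\le a n/k$ for some absolute constant $a$ (depending only on the constant hidden in the $O(\cdot)$). The key observation is that if $G'$ is \emph{not} connected, then the vertex set $V$ splits into two nonempty parts according to some proper subset of the components of $G'$, and since $G$ itself is $(cn/k)$-edge connected, the corresponding cut in $G$ contains at least $cn/k$ edges; every such edge is, by definition, inter-component. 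Hence $G'$ disconnected implies $X\ge cn/k$.

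Concretely, I would argue as follows. Choose the constant $c$ in the statement of the corollary to be at least $2a$, where $a$ is the constant from Theorem~\ref{T-main}/Theorem~\ref{T-variant} (and also large enough for those theorems to apply, i.e.\ at least the $c$ appearing there). Then by Markov's inequality,
\[
\Pr[\,G' \text{ is disconnected}\,]\;\le\;\Pr\!\left[\,X\ge \frac{cn}{k}\,\right]\;\le\;\frac{\Exp[X]}{cn/k}\;\le\;\frac{a n/k}{cn/k}\;=\;\frac{a}{c}\;\le\;\frac12 .
\]
Therefore $G'$ is connected with probability at least $1/2$, as claimed. The same computation works verbatim for both the $k$-out and the expected $k$-out models, invoking Theorem~\ref{T-main} in the former case and Theorem~\ref{T-variant} in the latter (both are available for $k\ge c\log n$).

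There is essentially no obstacle here: the only thing to be careful about is the quantifier on the constant $c$. The corollary reuses the single symbol $c$ both as the edge-connectivity threshold and as the ``large enough constant'' in the hypothesis $k\ge c\log n$; one simply takes $c$ to be the maximum of the constant needed for the main theorem to hold, the constant $a$ appearing in its $O(n/k)$ bound, and $2a$, so that $a/c\le 1/2$. I would state this explicitly to avoid any circularity in the choice of constants. One should also note the trivial sanity check that $(cn/k)$-edge connectivity forces $G$ to be connected in the first place (so that ``$G'$ connected'' is the right target), and that for the cut argument it suffices that $G$ have minimum cut at least $cn/k$, which is exactly what $(cn/k)$-edge connectivity provides.
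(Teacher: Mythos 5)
Your proof is correct and is essentially the paper's own argument: both observe that $(cn/k)$-edge connectivity forces at least $cn/k$ inter-component edges whenever $G'$ is disconnected, then combine the $O(n/k)$ expectation bound from Theorem~\ref{T-main}/\ref{T-variant} with Markov's inequality (the paper phrases it as $\Exp[X]\ge p\cdot cn/k$, which is the same thing) and choose $c$ large enough. Your explicit care about the dual role of the constant $c$ matches the paper's closing remark.
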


\begin{proof} Let $p$ the probability that $G'$ is not connected. As $G$ is $(cn/k)$-edge connected, if $G'$ is not connected, then the number of inter-component edges is at least $cn/k$, and the expected number of inter-component edges is at least $p\cdot cn/k$. By Theorem~\ref{T-main} or~\ref{T-variant} this expectation is at most $bn/k$, for some constant $b$.
Thus, $p\le b/c$ and the result follows if~$c\ge 2b$. We also need~$c$ to be large enough for Theorem~\ref{T-main} or~\ref{T-variant} to hold.
\end{proof}

Both the $k$-out and expected $k$-out models favor the selection of edges incident to low degree vertices. In Section~\ref{sub-class} we compare the $k$-out sampling model to the standard model of picking each edge with some fixed probability~$p$ and explain why the $k$-out model gives much better results, in certain cases, using the same total number of sampled edges.

On regular or almost regular graphs, the $k$-out sampling model is essentially identical to the model of sampling each edge independently with probability $p=kn/m$. Surprisingly, Theorem~\ref{T-variant} implies a new result in this model for almost regular graphs, see Theorem~\ref{T-regular} below.

An appealing feature of the $k$-out model is that the sampling can be implemented in a distributed manner, as the choices of different vertices are independent. Requiring each vertex to choose only~$k$ edges is a natural constraint in many settings, e.g., if a vertex has to \emph{communicate} the edges it selected to other vertices or to a referee. It is exactly such a setting (see Section~\ref{sub-app}) that motivated us. 
However, we believe that the new sampling theorems have importance beyond the concrete applications we give here.

In many settings, including the application described in Section~\ref{sub-app}, the number of inter-component edges is a measure of the ``work'' that still needs to be done after ``processing'' the sampled subgraph.

\subsection{Previous results in the $k$-out model}\label{sub-kout-prev}

The $k$-out model is first mentioned in a question of Ulam in ``The Scottish Book''\footnote{``The Scottish Book'' was a  notebook used  in the 1930's and 1940's by mathematicians of the Lw\'{o}w School of Mathematics in Poland to collect problems. The notebook was named after the ``Scottish Caf\'{e}" where it was kept. Among the contributors to the book were  Stefan Banach, 
John von Neumann 
and Stanislaw Ulam.}~\cite{mauldin1981scottish}.

\begin{quote}
PROBLEM 38: ULAM \\
Let there be given $N$ elements (persons). To each element we attach $k$ others
among the given $N$ at random (these are friends of a given person). What is the
probability $P_{k,N}$ that from every element one can get to every other element through
a chain of mutual friends? (The relation of friendship is not necessarily symmetric!)
Find $\lim_{N\to\infty} P_{k,N}$ (0 or 1?).
\end{quote}

While this explicitly defines a \emph{directed} model, most answers in the literature are for the corresponding undirected model. It is not difficult to prove, see \cite{mauldin1981scottish}, that for $k\ge 2$ the resulting undirected graph is connected with probability tending to~$1$, while for $k=1$ the graph is connected with probability tending to~$0$.

Let $G_{k\text{-out}}$ be a random $k$-out subgraph of the complete
graph on~$n$ vertices, as in Definition~\ref{D-k}. Fenner and Frieze
\cite{FennerF82} prove that for $k\ge 2$, $G_{k\text{-out}}$ is
$k$-vertex and $k$-edge connected with probability tending to~$1$, as
$n$ tends to~$\infty$. Frieze \cite{frieze1986maximum} proved that
when $n$ is even then $G_{k\text{-out}}$ has a perfect matching with
probability tending to~$1$, if $k\ge 2$, and tending to~$0$ if
$k=1$. Bohman and Frieze~\cite{bohman2009hamilton} prove that
$G_{k\text{-out}}$ has a Hamiltonian cycle with probability tending
to~$1$, if $k\ge 3$, and tending to~$0$, if $k=1,2$. All these results
can also be found in a chapter on random $k$-out graphs in the book of
Frieze and Karo{\'n}ski \cite{frieze2016introduction}.



Frieze et al.~\cite{FriezeGRV14} consider a random subgraph obtained by taking an arbitrary spanning forest of a graph~$G$, and $k-1$ random outgoing edges from each vertex. They prove that the resulting random subgraph has some desirable expansion properties with probability tending to~$1$.

Frieze and Johansson \cite{FriezeJ17} consider random $k$-out
subgraphs of graphs of minimum degree $(\frac{1}{2}+\varepsilon)n$,
for some $\varepsilon>0$. They show that if $2\le k = o(\sqrt{\log
  n})$, then the random $k$-out subgraph is $k$-connected with
probability tending to~$1$. Thus they generalize the earlier
results of Fenner and Frieze
\cite{FennerF82} for a complete base graph to arbitrary base graphs with
sufficiently high minimum degree. Frieze and Johansson \cite{FriezeJ17}
points out that the generalization fails for lower degrees: there
are connected graphs with minimum degree $n/2$ where a random $k$-out
subgraph is not even expected to be connected.

Our results are quite different from all the results cited above. We consider random $k$-out subgraphs of an \emph{arbitrary} base graph~$G$. As the graph~$G$ is arbitrary, we cannot expect the random $k$-out subgraph to be connected, with high probability. We focus instead on the question of how closely a random $k$-out subgraph of~$G$ captures the connectivity of~$G$. We do that by bounding the expected number of \emph{inter-component} edges, i.e., the number of edges of~$G$ that connect different connected components of the sampled subgraph.

To the best of our knowledge, no result similar to our Corollary~\ref{C-con} was known before. It replaces the requirement of a very high minimum degree made in Frieze and Johansson \cite{FriezeJ17} by a much weaker connectivity requirement. However, the resulting random $k$-out subgraph is only guaranteed to be connected with probability $1/2$, not with a probability tending to~$1$. This is best possible.


In a very recent paper \cite{GNT19}, Ghaffari et al.~used 2-out sampling
to get faster randomized algorithms for edge connectivity. One of their lemmas
is that, with high probability, the number of components
in a random 2-out subgraph is $O(n/\delta)$ where $\delta$ is the smallest
degree. The same bound on the number of components is tight for $k$-out
for any $k\geq 2$. This result complements our bound on the
number of inter-component edges, and may inspire further investigations
into the properties of random $k$-out subgraphs.

\subsection{Sampling each edge independently with probability~$p$}\label{sub-class}

The most widely studied random graph model is, of course, $G(n,p)$, in which each edge of the complete graph on~$n$ vertices is sampled, independently, with probability~$p$. There are literally thousands of papers written on such random graphs.

The $G(n,p)$ model can also be used to construct a random subgraph of a general base graph $G=(V,E)$. The most relevant result to our study is the following theorem:

\begin{theorem}[Karger, Klein and Tarjan \cite{KargerKT95}]\label{T-KKT}
  Let $G=(V,E)$ be an arbitrary simple graph and let $0<p<1$. Let $G'=(V,E')$ be a random subgraph of~$G$ obtained by selecting each edge of~$G$ independently with probability~$p$. Then, the expected number of edges of~$G$ that connect different connected components of~$G'$ is at most $n/p$.
\end{theorem}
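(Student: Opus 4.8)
The plan is to run a Kruskal-style incremental process on the edges of~$G$, revealing the sample status of each edge only at the moment that edge is first examined; deferring the coin flips this way keeps the relevant coin independent of the decisions made so far, which is the crux of the counting argument.

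Concretely, I would fix an arbitrary order $e_1,\dots,e_m$ on~$E$ and maintain a partition $\mathcal{P}$ of~$V$, starting from all singletons. Processing the edges in order, when I reach $e_i=\{u,v\}$ I check whether $u$ and~$v$ already lie in the same block of~$\mathcal{P}$; if so I discard~$e_i$, and otherwise I call~$e_i$ \emph{eligible}, flip its $\mathrm{Bernoulli}(p)$ coin (i.e.\ decide whether $e_i\in E'$), and if $e_i\in E'$ I merge the blocks of~$u$ and~$v$. The edges that get added to~$\mathcal{P}$ are exactly those selected by Kruskal's algorithm applied to~$E'$ in this order, so when the process terminates the blocks of~$\mathcal{P}$ are precisely the connected components of~$G'$. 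Consequently every edge of~$G$ that connects two different components of~$G'$ must have been eligible: a non-eligible edge has both endpoints in a common block, hence in a common component of~$G'$. Thus it suffices to bound the expected number of eligible edges.

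Finally, let $A$ denote the number of eligible edges and $M$ the number of merges performed. Each merge decreases the number of blocks of~$\mathcal{P}$ by one, and we start from $n$ blocks, so $M\le n-1$ deterministically. On the other hand, whether $e_i$ is eligible is determined by the fixed order together with the coins of $e_1,\dots,e_{i-1}$ only, so it is independent of the coin of~$e_i$; hence $\Exp[M]=\sum_{i}\Pr[e_i\text{ eligible}]\cdot p = p\,\Exp[A]$, giving $\Exp[A]=\Exp[M]/p\le(n-1)/p$. Since the number of inter-component edges is pointwise at most~$A$, its expectation is at most $(n-1)/p<n/p$, as claimed. I do not expect a genuine obstacle here: the argument is short, and the only points needing care are the lazy revelation of the coins (so that the factor~$p$ in the computation of $\Exp[M]$ is legitimately independent of eligibility) and the identification of the terminal partition with the component structure of~$G'$ (so that eligibility dominates being inter-component).
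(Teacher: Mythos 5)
The paper states this theorem purely as a citation of Karger, Klein, and Tarjan and does not reproduce a proof, so there is no in-paper argument to compare yours against. Your argument is correct, and it is in essence the classical Kruskal-with-deferred-decisions proof of the unweighted KKT sampling lemma: reveal the $\mathrm{Bernoulli}(p)$ coins lazily along a fixed edge order so that eligibility of $e_i$ is measurable with respect to the coins of $e_1,\dots,e_{i-1}$ and hence independent of $e_i$'s own coin, deduce $\Exp[M]=\sum_i \Pr[e_i\ \text{eligible}]\cdot p=p\,\Exp[A]$, bound $M\le n-1$ deterministically, and conclude $\Exp[A]\le (n-1)/p<n/p$. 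The only step worth one more sentence is the identification of the terminal blocks of $\mathcal{P}$ with the connected components of $G'$: each block is connected via eligible sampled edges, so blocks are contained in components; conversely every sampled edge, eligible or not, has both endpoints in a common terminal block (merged if eligible, already merged if not), so components are contained in blocks. Together these give equality, which is exactly what makes the implication ``inter-component $\Rightarrow$ eligible'' airtight, and that in turn justifies replacing the number of inter-component edges by $A$ in the final bound.
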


Theorem~\ref{T-KKT} is a special case of a theorem of Karger et al.~\cite{KargerKT95} that deals with weighted graphs. The more general theorem states that if $F$ is a \emph{minimum spanning forest} of~$G'$, then the expected number of edges in~$G$ that are \emph{$F$-light}, i.e., can be used to improve~$F$, is at most $n/p$. When the graph is unweighted, i.e., all edge weights are~$1$, an edge is $F$-light if and only if it connects different connected components of~$G'$. An alternative proof, of the weighted version, using backward analysis was obtained by Chan \cite{Chan98}. The weighted theorem was used by Karger et al.~\cite{KargerKT95} to obtain a linear expected time algorithm for finding Minimum Spanning Trees.

Theorem~\ref{T-KKT}, as stated, was used by Karger et al.~\cite{KargerNP99} and Halperin and Zwick~\cite{HalperinZ01} to obtain almost optimal and then optimal randomized EREW PRAM algorithms for finding connected components and spanning trees.

\subsubsection{Comparing $k$-out sampling and independent $p$ sampling}
Let us compare our new Theorem~\ref{T-main} with Theorem~\ref{T-KKT}. Let $G=(V,E)$ be a general $n$-vertex $m$-edge graph. Theorem~\ref{T-KKT} produces a sample of expected size $pm$. To obtain a random subgraph with the same number of edges we choose $k=pm/n$. The expected number of edges connecting different connected components is then $O(n/k) = O((n^2/m)/p)$. Note that this is a huge improvement over the $n/p$ bound of Theorem~\ref{T-KKT} when the graph is dense, i.e., $m\gg n$. Alternatively, if we express the expressions in term of~$k$, the bound for $k$-out is $O(n/k)$, while the bound for independent sampling is only $n/p=m/k$.

To highlight the difference between the two sampling schemes, and to show that the gap between $O((n^2/m)/p)$ and $n/p$ can actually occur, consider the following situation. Let $G$ be a graph composed of a clique of size $n/2$ and $8k$ cliques of size $n/(16k)$. All cliques are disjoint. The number of edges is $m\ge \frac{1}{8}n^2$. If $pm=kn$, then $p\le \frac{8k}{n}$. Consider a vertex in one of the small cliques. With a constant probability none of its incident edges are sampled, in which case it contributes $\frac{n}{16k}-1$ to the expectation. The expected number of edges connecting different connected components is $\Omega(n^2/k)$. Theorem~\ref{T-KKT} is thus asymptotically tight in this case. The corresponding bound for $k$-out is $O(n/k)$, a factor of~$n$ smaller. This example shows that it is much wiser, in certain situations, to sample edges incident on low degree vertices with higher probabilities.




\subsubsection{Improved result for independent $p$ sampling for almost regular graphs}

While Theorem~\ref{T-KKT} is best possible for general graphs, we show that it can be improved for \emph{almost regular} graphs.

\begin{definition}[Almost regular graphs]
  A graph $G=(V,E)$ is said to be almost $r$-regular if $r\le \deg(v)\le 2r$, for every $v\in V$.
  A graph is almost regular if it is almost $r$-regular for some~$r$.
\end{definition}

If $G$ is $r$-regular, then expected $k$-out sampling is equivalent to sampling each edge independently with probability $p=k/r$. The models are very closely related if~$G$ is almost $r$-regular. Namely, $k$-out sampling produces a subsample of the sample obtained by sampling each edge with probability $p=k/r$. Thus, Theorem~\ref{T-variant} immediately implies the following new result for independent $p$ sampling.

\begin{theorem}[Independent $p$ sampling of almost regular graphs]\label{T-regular}
  Let $G=(V,E)$ be an almost regular graph and let $0<p<1$. Let $G'=(V,E')$ be a random subgraph of~$G$ obtained by selecting each edge of~$G$ independently with probability~$p$. Then, the expected number of edges of~$G$ that connect different connected components of~$G'$ is $O((n^2/m)/p)$.
\end{theorem}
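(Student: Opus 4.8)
The plan is to realize the independent‑$p$ subgraph $G'$ as a supergraph of a random \emph{expected} $k$-out subgraph for a suitable $k=\Theta(rp)$, and then invoke Theorem~\ref{T-variant}. First note that almost‑$r$-regularity gives $\tfrac{nr}{2}\le m\le nr$, so the target bound $O((n^2/m)/p)$ is the same as $O(n/(rp))$ up to constants. Set $k:=\lfloor rp/2\rfloor$, and assume $k\ge c\log n$ (equivalently $p=\Omega(\log n/r)$), which is exactly the regime in which Theorem~\ref{T-variant} has content. Since $p<1$ we have $k\le rp/2<r/2\le\deg(v)$ for every vertex $v$, so in the expected $k$-out model every vertex samples each incident edge with probability exactly $k/\deg(v)\le k/r\le p/2$.

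Let $H$ be a random expected $k$-out subgraph of $G$. By Definition~\ref{D-kdegv} the per-edge sampling decisions are mutually independent, hence so are the events $\{e\in H\}$, and for $e=\{u,v\}$,
\[
q_e:=\Pr[e\in H]=1-\Bigl(1-\tfrac{k}{\deg(u)}\Bigr)\Bigl(1-\tfrac{k}{\deg(v)}\Bigr)\le 1-\bigl(1-\tfrac kr\bigr)^2\le \tfrac{2k}{r}\le p .
\]
Now couple $H$ with $G'$ edge by edge: put every edge of $H$ into $G'$, and for each $e\notin H$ put $e$ into $G'$ independently with probability $(p-q_e)/(1-q_e)$. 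Then each edge lies in $G'$ independently with probability $q_e+(1-q_e)\cdot\tfrac{p-q_e}{1-q_e}=p$, so $G'$ has exactly the law of an independent‑$p$ subgraph of $G$, while $H\subseteq G'$ holds deterministically.

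Since $H\subseteq G'$, every connected component of $G'$ is a union of components of $H$, so any edge of $G$ joining two components of $G'$ also joins two components of $H$; thus the number of inter-component edges of $G'$ is at most that of $H$ pointwise. Taking expectations and applying Theorem~\ref{T-variant} (valid because $k\ge c\log n$), together with $k=\Theta(rp)$, gives
\[
\Exp[\#\{\text{inter-component edges of }G'\}]\le \Exp[\#\{\text{inter-component edges of }H\}]=O(n/k)=O(n/(rp))=O((n^2/m)/p).
\]

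Once it is set up, the argument is essentially routine; the only real points of care are two pieces of bookkeeping. First, one must pay the factor $2$ in $q_e\le 2k/r$ — an edge may be picked from either endpoint — which is precisely why almost‑$r$-regularity (degrees within a factor $2$) together with the choice $k=\lfloor rp/2\rfloor$ makes the domination $H\subseteq G'$ go through; under a sharper regularity hypothesis one could take $k$ closer to $rp$. Second, the reduction inherits the hypothesis $k\ge c\log n$ of Theorem~\ref{T-variant}, so the statement is really about $p=\Omega(\log n/r)$, and this restriction cannot be removed here for the same reason it cannot be removed in Theorem~\ref{T-variant}.
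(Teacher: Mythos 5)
Your proof is correct and follows the paper's own route: realize a random expected $k$-out subgraph (with $k=\Theta(rp)$) as a subgraph of the independent-$p$ sample via a per-edge monotone coupling, then apply Theorem~\ref{T-variant} and use $m=\Theta(nr)$. You are in fact more careful than the paper's one-line justification --- you correctly track the factor of $2$ in the per-edge inclusion probability $q_e\le 2k/r$ (hence the choice $k=\lfloor rp/2\rfloor$ rather than $k\approx rp$), and you rightly flag that the reduction inherits the hypothesis $k\ge c\log n$, i.e.\ $p=\Omega(\log n/r)$, a requirement that the paper's statement of Theorem~\ref{T-regular} leaves implicit.
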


We note that Theorem~\ref{T-regular} cannot be extended to the weighted case. Consider a complete weighted graph on~$n$ vertices in the which the weights of all edges incident on a vertex are distinct. It is easy to see that the expect number of $F$-light edges is $\Theta(n/p)$.

\subsection{Applications in distributed computing}\label{sub-app}

The problem that led us to consider the sampling model discussed in this paper is the following. Each vertex in an undirected graph only knows its neighbors.
It can send a single message to a referee which should then determine, with high probability, a spanning forest of the graph. How many bits does each vertex need to send the referee?

If the vertices have access to public randomness, the answer is $\Theta(\log^3 n)$. The upper bound follows easily from Ahn et al.  \cite{ahn2012graph}. A matching lower bound was recently obtained by Nelson and Yu \cite{NY2018}.

In Section~\ref{S-sfprivate} we show, using the sampling theorem (Theorem~\ref{T-main}) with $k=\sqrt{n}$, and a few other ideas, that $O(\sqrt{n}\log n)$ bits are sufficient when the vertices only have access to \emph{private randomness}. Nothing better than $O(n)$ was previously known with private randomness.

The best known deterministic protocol uses $n/2$ bits and it is open whether this is optimal.

In the MapReduce-like model (see \cite{Karloff2010}), messages are passed in the form of $(key,value)$ pairs of $O(\log n) $ bits which is the wordsize. We assume there are $n$ machines and each can send, receive, and compute with no more than $m$ words in any round. In each round if the words with the same key can fit on one machine, then one machine will receive them all, process them and emit  a new set of messages for the next round.  Here we assume each edge $(u,v)$ of the input graph appears twice, as $(key=u, value=v)$ and $(key=v, value=u)$. As the machines are stateless, these will be recirculated in every round.   It is easily seen that if  $m=O(n \log^2 n)$, the one-way communication algorithm with public randomness cited above leads to a one round Monte Carlo algorithm in the MapReduce-like model. Each machine which receives the edges incident to a particular vertex will compute the corresponding $O(\log^3 n)$-bit message for the referee and send it using $O(\log^2 n)$ $O(\log n)$-bit messages all tagged with a special key. One machine will receive all these messages from all machines and can then act as the referee to compute a spanning forest. If no public randomness is available, then this must be preceded by a round in which a random string is created by one machine and a copy is sent with a key for each vertex.

A consequence of Theorem \ref{T-main} is an almost equally simple four-round algorithm to compute a spanning forest which requires space of only $m=O(n k )$ words. Note that for $k=\log n$, this is a factor $\log n$ less memory than above. As before, for each vertex, there is a machine which receives the edges incident to the vertex and it sends out, in round~1, its up to $k$ sampled edges to a referee. This is done by giving all these messages a special key so that they are received by one machine that can act as the referee. The referee computes the spanning forest of the $k$-out subgraph. The spanning forest can be distributed to $n$ machines in two rounds (see {Jurdzi\'{n}ski} and Nowicki \cite{Jurdzinski2018}). We assume again that the input graph is recirculated and for each vertex  there is a machine which receives all its incident edges and the spanning forest of the $k$-out subgraph. This machine can determine which of its incident edges connect up different components of the $k$-out subgraph, and use a special key to send them out. The spanning forest of the $k$-out subgraph is also recirculated with the special key.  Since there are no more than $O(n/k) +n <m $ such edges, a single machine will receive all these inter-component edges and the edges of the $k$-out spanning forest and compute the spanning forest of the original graph. If Theorem~\ref{T-main} holds with $k=O(1)$, this would be an extremely simple four-round algorithm with $m=O(n)$.

Jurdzi\'{n}ski and Nowicki \cite{Jurdzinski2018} obtained a $O(1)$ round algorithm with $m=O(n)$, but it is much more complicated.

\section{Proof of the sampling theorem}\label{S-proof}

\subsection{Relation between $k$-out and expected $k$-out models}



%
%
%


The following simple lemma shows that for sufficiently large $k$, the expected and (exact) $k$-out models have essentially the same expected number of inter-component edges.

\begin{lemma}\label{lem:same-expectation}
  Let $G$ be an arbitrary undirected $n$-vertex graph. For any $t$,
  let $X_t$ be the number of inter-component edges in $G$ with respect to a $t$-out subgraph, and let $Y_t$ be the number of inter-component edges in $G$ with respect to an expected $t$-out subgraph.
  Then for
  $k\geq c\log n$ where $c$ is a large enough constant,
  $\Exp[X_{2k}]-o(1) \leq \Exp[Y_k] \leq
  \Exp[X_{\frac{k}{2}}]+o(1)$.
\end{lemma}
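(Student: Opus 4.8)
The plan is to couple the expected $k$-out model with the exact $2k$-out and exact $\frac{k}{2}$-out models on a vertex-by-vertex basis, using the fact that, for a degree-$d$ vertex with $d\ge k$, the number of incident edges it samples under the expected $k$-out process is tightly concentrated around $k$ when $k\ge c\log n$. Concretely, fix a vertex $v$ of degree $d\ge k$. Under the expected $k$-out rule $v$ samples each of its $d$ incident (directed) edges independently with probability $k/d$, so the number $B_v$ of edges it samples is $\mathrm{Bin}(d,k/d)$ with mean $k$. By a Chernoff bound, $\Pr[B_v \notin (\frac{k}{2}, 2k)] \le 2\ee^{-k/12}$ (or similar), which is at most $n^{-3}$ once $c$ is large enough. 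For vertices of degree $d<k$, all three models ($\frac{k}{2}$-out, $2k$-out, expected $k$-out, interpreting $t$-out as ``all edges if $\deg(v)\le t$'') have $v$ select all incident edges deterministically whenever $d\le k/2$; only the narrow band $k/2<d<k$ needs a moment's care, and there an expected $k$-out vertex still samples all its edges (since $k/\max\{k,d\}=1$), which lies between $\frac{k}{2}$-out and $2k$-out behavior trivially. So the only real event to control is $\mathcal{E} = \{\exists v: B_v \notin (\frac{k}{2},2k)\}$, which by a union bound has probability at most $n \cdot n^{-3} = n^{-2} = o(1/n)$.

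Next I would build the coupling. Generate, for each vertex $v$ with $\deg(v)\ge k$, a uniformly random ordering $\pi_v$ of its incident edges together with the independent coins defining $B_v$; let the expected $k$-out subgraph keep the first $B_v$ edges in $\pi_v$, the exact $2k$-out subgraph keep the first $\min\{2k,\deg(v)\}$, and the exact $\frac{k}{2}$-out subgraph keep the first $\lceil k/2\rceil$. This is a legitimate coupling: marginally each subgraph has the right distribution (a uniform prefix of a uniform ordering is a uniform subset of that size). On the complement of $\mathcal{E}$ we have, at every vertex simultaneously, $\lceil k/2\rceil \le B_v \le \min\{2k,\deg(v)\}$, hence the edge set sampled by $v$ in the $\frac{k}{2}$-out graph is contained in that of the expected $k$-out graph, which is contained in that of the $2k$-out graph. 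Taking unions over endpoints, this gives the chain of subgraph inclusions $G'_{k/2\text{-out}} \subseteq G'_{\exp k} \subseteq G'_{2k\text{-out}}$ on $\overline{\mathcal{E}}$. Since adding edges only merges components, the set of inter-component edges shrinks, so $X_{2k} \le Y_k \le X_{k/2}$ pointwise on $\overline{\mathcal{E}}$ under this coupling.

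Finally I would take expectations. On $\overline{\mathcal{E}}$ we have $\Exp[Y_k \mathbf{1}_{\overline{\mathcal{E}}}] \le \Exp[X_{k/2}\mathbf{1}_{\overline{\mathcal{E}}}] \le \Exp[X_{k/2}] \le \Exp[X_{k/2}] + o(1)$, and on $\mathcal{E}$ the contribution of $Y_k$ is at most (number of edges) $\cdot \Pr[\mathcal{E}]$; but the number of inter-component edges is always at most $\binom{n}{2} \le n^2$, so this contributes at most $n^2 \cdot n^{-2} = 1$ — which is not quite $o(1)$, so I would instead sharpen the Chernoff bound to get $\Pr[\mathcal{E}] \le n^{-3}$ (achievable by taking $c$ slightly larger, since the per-vertex bound $\ee^{-\Omega(k)} = \ee^{-\Omega(c\log n)} = n^{-\Omega(c)}$ beats any fixed polynomial), giving an $O(1/n) = o(1)$ error term. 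The lower bound $\Exp[X_{2k}] - o(1) \le \Exp[Y_k]$ is symmetric: $\Exp[X_{2k}] = \Exp[X_{2k}\mathbf{1}_{\overline{\mathcal{E}}}] + \Exp[X_{2k}\mathbf{1}_{\mathcal{E}}] \le \Exp[Y_k] + n^2\Pr[\mathcal{E}] = \Exp[Y_k] + o(1)$.

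The only genuinely delicate point is the bookkeeping for vertices in the degree band $k/2 < \deg(v) < k$, where one must check that ``$t$-out'' with $t=k/2$ is interpreted as selecting all $\deg(v) > k/2$ incident edges — wait, that is wrong: a $\frac{k}{2}$-out vertex of degree $\deg(v)$ with $k/2 < \deg(v)$ selects only $\lceil k/2\rceil < \deg(v)$ edges, whereas the expected $k$-out vertex selects \emph{all} of them; so the inclusion $G'_{k/2} \subseteq G'_{\exp k}$ still holds at such a vertex, and the inclusion $G'_{\exp k} \subseteq G'_{2k}$ also holds since a $2k$-out vertex of degree $\deg(v) < k < 2k$ likewise selects all incident edges. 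So the chain of inclusions survives, and this is really a routine case check rather than an obstacle. I expect the main (mild) obstacle to be only the choice of constant in the Chernoff bound so that the error term is truly $o(1)$ rather than $O(1)$.
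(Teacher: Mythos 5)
Your proposal is correct and takes essentially the same approach as the paper: both arguments observe that, conditioned on the number of edges each vertex samples in the expected $k$-out model being in $[\frac{k}{2},2k]$, the per-vertex sample is (as a random set) sandwiched between an exact $\frac{k}{2}$-out and an exact $2k$-out sample, then apply Chernoff plus a union bound to show the bad event has negligible probability and contributes only $o(1)$ to the expectation. Your random-permutation coupling is a clean concrete instantiation of the paper's ``add/remove random edges'' argument, and your indicator-function bookkeeping for the error term is a mild simplification of the paper's manipulation with $\Exp[Y_k\mid\mathcal{E}]$ and $D=\frac{p}{1-p}\lvert\Exp[Y_k]-\Exp[Y_k\mid\neg\mathcal{E}]\rvert$, but the substance is identical.
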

\begin{proof}
Consider the directed edges in a random expected $k$-out subgraph.
For any  $v\in V$, let~$S_{v,k}$ be the set of outgoing edges of~$v$ in this subgraph, and let~$s_{v,k}=|S_{v,k}|$. Note that given $s_{k,v}$,
$S_{k,v}$ is a random subset of the outgoing edges of~$v$ of size $s_{v,k}$.
If $\deg_G(v)\le k$, then $S_{k,v}$ includes all edges incident to~$v$. Assume that $\deg_G(v)> k$.
Let $E_v$ be the set of outgoing edges of~$v$ in~$G$.
If $s_{v,k}\ge k/2$, then a random subset of size $k/2$ of $S_{k,v}$
is a random subset of~$E_v$ of size $k/2$. Thus, it has the same distribution as the subset of edges chosen by~$v$ in the exact $k/2$-out model. Similarly, if $s_{v,k}\le 2k$, then a random subset of~$E_v$ of size $2k$ that contains $S_{k,v}$ has exactly the same distribution as the edges chosen by~$v$ in the exact $2k$-out model.

Let
  $\mathcal{E}$ be the event that $\frac{k}{2} \leq S_{v,k} \leq 2k$
  for every vertex $v$ with $\deg_G(v)>k$.  Then
  \begin{align}
    \Exp[X_{2k}]
    \:\leq\:
    \Exp[Y_k \mid \mathcal{E}]
    \:\leq\:
    \Exp[X_{\frac{k}{2}}]\;.
    \label{eq:expectation-bounds}
  \end{align}
  The first inequality holds because assuming $\mathcal{E}$, we can
  add random edges to the expected $k$-out subgraph to get an
  exact $2k$-out subgraph, as explained above. Adding random edges to a sampled subgraph can only decrease the number of inter-component edges. Similarly, the second inequality holds because
  assuming $\mathcal{E}$, we can randomly remove edges from the expected
  $k$-out subgraph to get an exact $\frac{k}{2}$-out subgraph, and this
  can only increase the number of inter-component edges.
  It is easy to check that the required containments hold also for vertices of degree less than~$k$.

  Let $p=\Pr[\neg\mathcal{E}]=1-\Pr[\mathcal{E}]$ be the probability
  that $S_{v,k}\not\in[\frac{k}{2},2k]$ for at least one $v$ with
  $\deg_G(v)>k$. Then $\Exp[Y_k] = (1-p)\Exp[Y_k\mid\mathcal{E}] +
  p\Exp[Y_k\mid\neg\mathcal{E}]$ and thus $\Exp[Y_k\mid \mathcal{E}]
  = \Exp[Y_k] +
  \frac{p}{1-p}\big(\Exp[Y_k]-\Exp[Y_k\mid\neg\mathcal{E}]\big)$.  Setting
  $D := \frac{p}{1-p}\left|\vphantom{x^{x^x}_{y}}\Exp[Y_k]-\Exp[Y_k\mid\neg\mathcal{E}]\right|$ and combining with~(\ref{eq:expectation-bounds}) gives
  \begin{align*}
    \Exp[X_{2k}]-D
    \leq
    \Exp[Y_k\mid\mathcal{E}] - D
    \leq
    \Exp[Y_k]
    \leq
    \Exp[Y_k\mid\mathcal{E}] + D
    \leq
    \Exp[X_{\frac{k}{2}}] + D \;.
  \end{align*}
  By definition, $0\leq Y_k\leq \binom{n}{2}$, so $D\leq
  \frac{p}{1-p}\binom{n}{2}$.  For any vertex $v$ with
  $\deg_G(v)>k$, we have $\Exp[S_{v,k}]=k$, so by Chernoff bounds
  $\Pr[S_{v,k}>2k]<e^{-\frac{k}{3}}$ and
  $\Pr[S_{v,k}<\frac{k}{2}]<e^{-\frac{k}{8}}$. A union bound then
  gives $p< n\cdot(e^{-\frac{k}{3}}+e^{-\frac{k}{8}})$. For $k\geq c\ln n$, where $c>24$, we then have
  $D\leq\frac{p}{1-p}\binom{n}{2}=o(1)$.
\end{proof}

The first inequality $\Exp[X_{2k}]\le \Exp[Y_k]+o(1)$, for $k\ge c\log n$, shows that Theorem~\ref{T-variant} implies Theorem~\ref{T-main}.

\subsection{Overview of the proof of the sampling theorem for expected $k$-out}\label{sub-overview}



\newcommand{\EE}{\mathbb{E}}

We prove Theorem~\ref{T-variant} by sampling edges gradually.
We construct two sequences of random subgraphs $G_1\subseteq G_2\subseteq\ldots\subseteq G_r$, and $H_1\supseteq H_2\supseteq\ldots\supseteq H_r$, for some $r\le n$, all subgraphs of~$G$. We also have $G_i\subseteq H_i$, for $i=1,2,\ldots,r$.  Each subgraph $G_i$ is obtained by sampling more edges from $H_{i-1}\setminus G_{i-1}$ and \emph{adding} them to~$G_{i-1}$. Each subgraph $H_i$ is obtained from~$H_{i-1}$ by \emph{removing} some edges that are inter-component edges of~$G_i$, and possibly some additional edges not contained in~$G_i$. Initially $G_1$ contains all edges that touch a vertex of degree at most~$k$. Note that all these edges are contained in every (expected) $k$-out sample.
We let $H_1=G$, the original graph. At the end of the process, $G_r$ and $H_r$ have the same connected components, viewed as sets of vertices.

Edges are sampled and added during the process in a way that ensures that $G_r$ can be extended into an expected $k$-out sample of~$G$. Adding more edges to~$G_r$ can only reduce the number of inter-component edges. All the edges of~$G$ that are inter-component with respect to~$G_r$, and therefore also with respect to~$H_r$, must be in $G\setminus H_r$. Thus, to prove the theorem it is enough to prove that $\EE[|G\setminus H_r|]=O(n/k)$.

\paragraph{The round sampling model}

%
A convenient way to view the expected $k$-out sampling is that for every (directed) edge $e=(u,v)\in E$, where $\deg(u)\ge k$, we independently
generate a uniform random number $x_e\in [0,1]$, and include (the undirected version of)~$e$ in the sampled subgraph if and only if $x_e< p_e=k/\deg(u)$. (In what follows, we let $E$ stand for both the undirected edges of the original graph $G$ and their corresponding directed edges.)

Edges are sampled in rounds.
For every edge $e$, we generate a sequence of numbers
$q_{e,0},\ldots, q_{e,r}$ such that
$0=q_{e,0}\leq q_{e,1}\leq \cdots \leq q_{e,r}=p_e$, 
where $r$ is the number of rounds.
Edge~$e$
is sampled in round~$i$ if and only if $x_e\in [q_{e,i-1},q_{e,i})$.
An edge $e$ is thus sampled in any one of the rounds, if and only if $x_e<p_e$.
The last round $r$ is special and is just used to ensure that $p_{e,r}=p_e$ for every $e$.

In the beginning of round~$i$, where $i< r$, we set $q_{e,i}$ for every (directed) edge $e\in E$ such that $q_{e,i}\in [q_{e,i-1},p_e)$. The values $q_{e,i}$ may depend on the sampling outcomes in previous rounds.
That is, for all $j<i$, we
know $G_j$, $H_j$, and all the $q_{e,j}$. In particular, we know if~$e$ has already been sampled, i.e., if $x_e<q_{e,i-1}$.

We do not choose the number of rounds $r$ in advance.
At the end of round $i$, we may decide to end round sampling with one last special round, by setting $r=i+1$
and $q_{e,r}=p_e$ for all edges $e$.

Given that
$e$ was not sampled 
in rounds $1,\ldots,i-1$, the probability that \blue{it is} sampled 
in round~$i$
is $p_{e,i}=(q_{e,i}-q_{e,i-1})/(1-q_{e,i-1})$. In the analysis, we adopt the opposite view: We choose
$p_{e,i}$ and then set $q_{e,i}=q_{e,i-1}+p_{e,i}(1-q_{e,i-1})$. We can pick any sequence of $p_{e,i}$
such that $\sum_{j=1}^i p_{e,j}\leq p_e$.

\paragraph{General sampling strategy}

More specifically, to move from~$G_i$ to~$G_{i+1}$, we look at a \emph{smallest} connected component $A_i$ of~$G_i$ that is \emph{not} a complete connected component of~$H_i$. We assign edges that emanate from~$A_i$ some carefully chosen non-zero sampling probabilities $p_{e,i}$, being careful not to exceed the overall sampling probability of each edge. We add edges to~$G_{i+1}$ according to these probabilities. We hope that the addition of new edges, if any, connects~$A_i$ to at least one other component of~$G_i$. If this happens, i.e., at least one of the edges of the cut $\partial_{H_i}(A_i)$ defined by~$A_i$ in~$H_i$ was sampled, we let $H_{i+1}=H_i$. Otherwise, we let $H_{i+1}$ be~$H_i$ with the edges of the cut $\partial_{H_i}(A_i)$ removed. (Note that this ensures that $A_i$ is also a connected component of~$H_{i+1}$.) We are essentially deciding to `give-up' on the edges of $\partial_{H_i}(A_i)$ and assume that they will end up as inter-component edges of~$G_r$.
If $H_{i+1}\subsetneq H_i$, we perform an additional \emph{trimming} operation that removes some additional edges from~$H_{i+1}$, but still maintaining $G_{i+1}\subseteq H_{i+1}$. Trimming will be explained later.

Each round reduces the number of connected components of~$G_i$ that are not connected components of~$H_i$.  Thus, for some $r\le n$, all the connected components in~$G_{r}$ are also connected components of~$H_{r}$ and the process stops.

\newcommand{\TT}{{\cal T}}

We let $X_i=|H_{i}\setminus H_{i+1}|$, the number of edges that were removed from~$H_i$ to form~$H_{i+1}$. Note that~$X_i$ is a random variable. Our goal is to bound $\EE[|G\setminus H_r|] = \EE[\sum_{i=1}^{r-1} X_i] = \sum_{i=1}^{r-1} \EE[X_i]$. To bound $\EE[X_i]$, we consider conditional expectations of the form $\EE[X_i|T_i]$, where $T_i$ is a \emph{transcript} of everything that happened during the construction of $G_1\subseteq G_2\subseteq\ldots\subseteq G_i$ and $H_1\supseteq H_2\supseteq\ldots\supseteq H_i$. (Actually, $G_1,\ldots,G_i$ and $H_1,\ldots,H_i$ give a full account of what happened. The last two subgraphs~$G_i$ and~$H_i$ on their own do not tell the whole story, as they do not specify when each edge of~$G_i$ was added.)


Let $\TT$ be the probability distribution induced on full transcripts of the process described above.  Let $\TT_i$ be the probability distribution of transcripts of the first~$i$ rounds. Given a full transcript~$T$, we let $T_i$ be its restriction to the first $i$ rounds. Clearly, if $T$ is chosen according to~$\TT$, then $T_i$ is distributed according to~$\TT_i$. Then,

\[ \sum_{i=1}^{r-1} \EE[X_i] \;=\;
\sum_{i=1}^{r-1} \EE_{T_i\sim \TT_i}\bigl[ \EE[X_i|T_i] \bigr] \;=\;
\EE_{T\sim\TT}\left[ \sum_{i=1}^{r-1} \EE[X_i|T_i] \right] \;.\]
We prove that $\sum_{i=1}^{r-1} \EE[X_i|T_i] = O(n/k)$ for \emph{every} transcript~$T$. 
This is done using judiciously chosen sub-sampling probabilities and an intricate amortized analysis that shows that charging each vertex a cost of $O(1/k)$ is enough to cover the expected total number of inter-component edges.

For a given transcript $T$, the expression $\sum_{i=1}^{r-1} \EE[X_i|T_i]$ has the following meaning. The transcript~$T$ gives a full description of what happens during the whole process. No randomness is left. However, in the~$i$th round we compute the expectation of~$X_i$ given the past $T_i$, \emph{without} peeping into the future.

(The above discussion ignores the fact that $r$ is also a random variable. This can be easily fixed by adding dummy rounds that ensure that we always have $r=n$.)

We are thus left with the task of bounding $\sum_{i=1}^{r-1} \EE[X_i|T_i]$ for a specific transcript~$T$. Let~$A_i$ be the component we try to connect at the $i$-th round. The expectation $\EE[X_i|T_i]$ is the size of cut $\partial_{H_i}(A_i)$ multiplied by the probability that no edge from this cut is sampled. (This ignores trimming that will be explained later.) We need to bound this quantity and then decide how to split it among the vertices of~$A_i$. As mentioned, the total cost of each vertex, during all iterations, should be $O(1/k)$. In deciding how to split the cost, we take the full history~$T_i$ into account.

Having explained the general picture, we need to get down to the details. In the following we use~$G'$, $H$ and $A$ instead of $G_i$, $H_i$ (or $H_{i+1}$) and $A_i$.

\subsection{Growing and trimming}\label{sub-trim}

Our proof repeatedly uses growth and trimming steps:
\begin{itemize}
\item {\bf Growth Step}: Pick a smallest connected component $A$ of $G'$ which is not a complete connected component of~$H$ and try to connect it to other components of~$G'$ by some limited sampling of edges adjacent to~$A$ (as shall be described later). If the step fails to connect $A$ to other components, remove all of the edges in the cut defined by $A$ from~$H$.
\item {\bf Trimming Step}: If $v\in G$ is such that $\deg_{H}(v) < \frac{1}{3} \deg_{G}(v)$, \emph{remove} from~$H$ all edges that connect~$v$ to vertices not in the connected component of~$v$ in~$G'$.
\end{itemize}

A schematic description of the process is given in Figure~\ref{fig:grow}. The red components on the left are components that will not grow anymore, as none of the edges going out of them is in~$H$. The green components on the right are components that may still grow. Solid edges are edges that were already sampled, i.e., are in~$G'$. Dashed edges are edges in $H\setminus G'$, i.e., edges that were not sampled yet, but may still be sampled. Finally, dotted edges are edges that were removed from~$H$ and thus will not be sampled. Unfilled vertices represent vertices that were trimmed.

\begin{figure}[t]
\begin{center}
\includegraphics[scale=0.35]{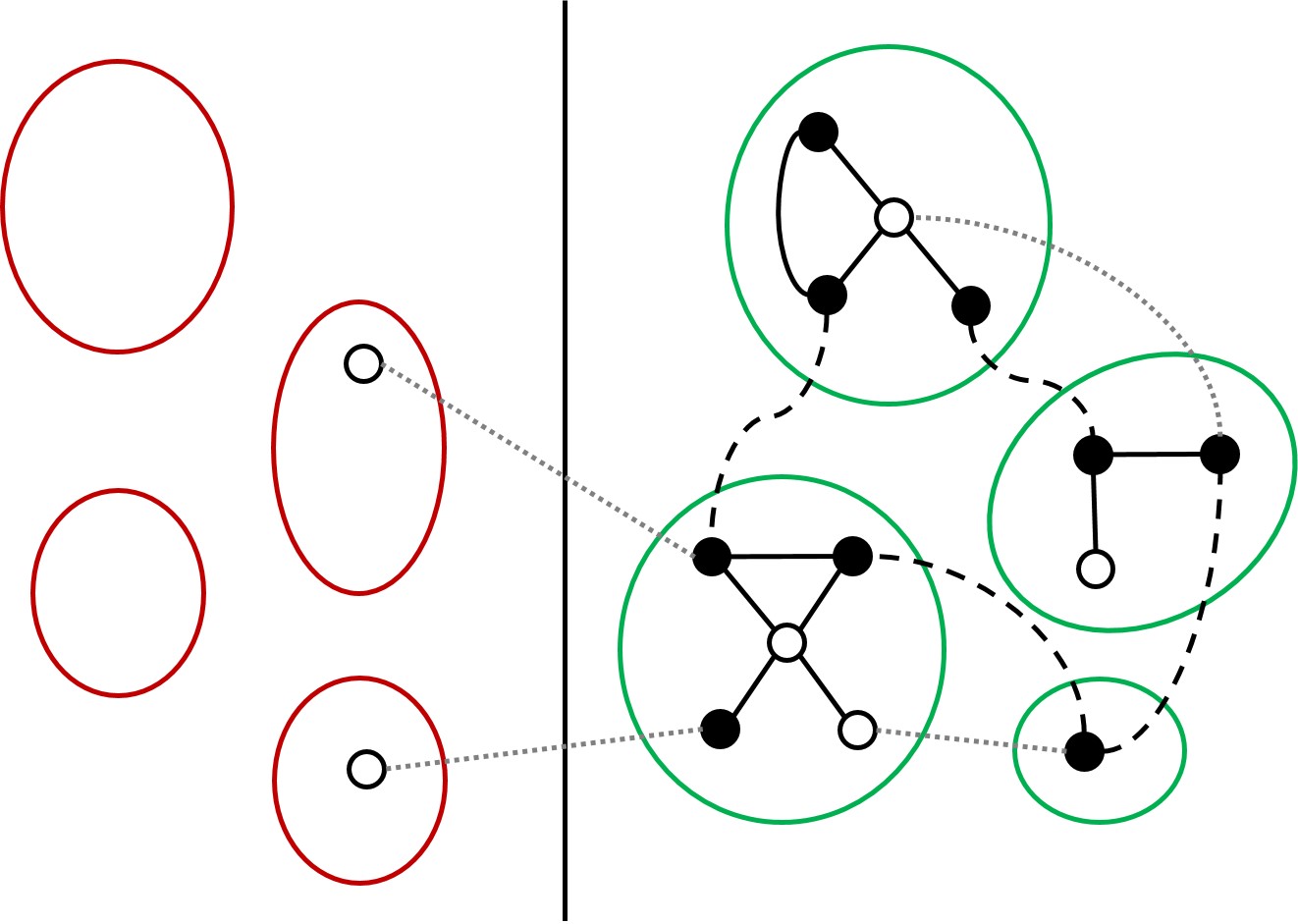}
\end{center}
\vspace*{-10pt}
\caption{A schematic description of the repeated growing and trimming process.}\label{fig:grow}
\end{figure}


By choosing a smallest component at each growth step we ensure that the size of the component is at least doubled, if the step is successful. Each vertex can therefore participate in at most $\log n$ such growth steps. (Components may also grow as a result of other components growing into them.)

A trimming step may trigger a cascade of additional trimming steps. When the process ends, for every vertex $v$ which has edges in~$H$ that connect it to vertices not in the connected component of~$v$ in~$G'$, we have $\deg_H(v)\ge \frac{1}{3}\deg_G(v)$. This is important in the proofs below as it shows that we can sample edges with probabilities proportional to $k/\deg_H(v)$. The following lemma claims that trimming does not remove too many additional edges.

\begin{lemma}
The number of edges removed from~$H$ during Trimming steps is at most the number of edges removed from~$H$ during (unsuccesful) Growing steps.
\end{lemma}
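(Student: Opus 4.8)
The plan is to set up a charging (amortization) argument. Each Trimming step is triggered by a vertex $v$ with $\deg_H(v) < \frac13\deg_G(v)$, and the edges removed are precisely those edges of $H$ incident to $v$ that leave $v$'s component in $G'$; there are at most $\deg_H(v) < \frac13\deg_G(v)$ of them. The key observation is that for the inequality $\deg_H(v) < \frac13\deg_G(v)$ to hold, at least $\frac23\deg_G(v)$ edges incident to $v$ must already have been removed from $H$ — either by previous Growing steps or by previous Trimming steps. So I want to maintain a potential (a ``bank'') associated with the edges deleted so far, from which the cost of each new Trimming step can be paid, while Growing steps are the only operation that deposits into the bank.

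First I would fix a transcript and track, for each vertex $v$, the quantity $d_v := \deg_G(v) - \deg_H(v)$, the number of incident edges already removed. Over the whole process $d_v$ only increases. A Growing step that is unsuccessful removes a cut $\partial_H(A)$; charge each such removed edge to the \emph{one} endpoint lying outside $A$ is not quite symmetric, so instead I would simply count total removals. Let $R_{\mathrm{grow}}$ and $R_{\mathrm{trim}}$ be the total numbers of edge-removals caused by Growing and Trimming steps respectively; I want $R_{\mathrm{trim}}\le R_{\mathrm{grow}}$. The natural potential is $\Phi := \sum_v \max\{0,\ d_v - (1-\tfrac13)\deg_G(v)\} = \sum_v \max\{0, d_v - \tfrac23\deg_G(v)\}$: the ``surplus'' of removals at $v$ beyond the threshold $\frac23\deg_G(v)$ at which a Trimming step becomes possible. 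When a Trimming step fires at $v$, it removes $r \le \deg_H(v) = \deg_G(v)-d_v$ edges incident to $v$, but it also removes $r$ edges incident to various \emph{other} vertices $w$ (the other endpoints). The point is that a Trimming step at $v$ only ever removes edges one of whose endpoints ($v$) already has $d_v > \frac23\deg_G(v)$, so the $r$ removed edges decrease the slack at $v$'s side against a bank that was built up entirely by earlier Growing steps at $v$; this needs to be made precise so that each Growing-removal pays for at most one Trimming-removal.

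The hard part will be bookkeeping the "other endpoint'' of a trimmed edge: a Trimming step at $v$ removes $r$ edges, and while the $v$-side of each is covered by $v$'s surplus, the other endpoints $w$ get their $d_w$ incremented, which could in turn push some $w$ over its own threshold and trigger a cascade — exactly the cascade mentioned before the lemma. So the induction/potential has to be global, not per-vertex: I would show that each edge removed by \emph{any} Trimming step can be injectively assigned to a distinct edge removed by some (unsuccessful) Growing step. Concretely, process removals in time order; maintain a set $B$ of "available'' Growing-removed edges (the bank). When a Growing step removes the cut $\partial_H(A)$ unsuccessfully, add all those edges to $B$. When a Trimming step at $v$ removes edges $f_1,\dots,f_r$, note $d_v \ge \tfrac23\deg_G(v)$ \emph{before} the step, so at least $\tfrac23\deg_G(v) \ge 2\deg_H(v) \ge 2r$ edges incident to $v$ have been removed earlier; of those, the ones removed by earlier Growing steps touching $v$ that have not yet been "spent'' form a sub-bank $B_v\subseteq B$, and I must argue $|B_v|\ge r$ so I can match $f_1,\dots,f_r$ to $r$ distinct elements of $B_v$ and remove them from $B$. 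The delicate accounting — ensuring earlier Trimming cascades didn't already exhaust $B_v$, which is where one uses that a Trimming-removal incident to $v$ also counts toward $d_v$ and hence is itself "balanced'' — is the crux; once the injection is built, $R_{\mathrm{trim}} = \sum_{\text{trim steps}} r \le |B| = R_{\mathrm{grow}}$, which is the claim.
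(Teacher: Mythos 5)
You correctly identify that an amortization argument is needed, and you have the right arithmetic at the threshold: once $\deg_H(v)<\tfrac13\deg_G(v)$, at least $\tfrac23\deg_G(v)>2\deg_H(v)$ incident edges have already been removed, while trimming $v$ removes at most $\deg_H(v)$ more. But the specific bookkeeping you propose has a genuine gap, which you yourself flag and do not close. Your plan requires that, at the moment $v$ is trimmed, the per-vertex sub-bank $B_v$ of \emph{unspent grow-removals incident to $v$} has size at least $r=\deg_H(v)$. This can fail: of the $\ge 2r$ edges already removed at $v$, an arbitrary fraction may have been removed by trimming steps at $v$'s neighbours (a cascade), and those edges were charged to the \emph{neighbours'} banks, so they contribute nothing to $B_v$. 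Knowing ``$\ge 2r$ removed'' gives no lower bound at all on how many were grow-removals, so $|B_v|\ge r$ is not established and the injection cannot be built edge-by-edge this way. The remark that ``a Trimming-removal incident to $v$ also counts toward $d_v$ and hence is itself balanced'' does not repair this: being counted in $d_v$ is exactly what causes $v$ to cross the threshold with a depleted bank.

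The paper avoids this by choosing a potential that treats \emph{every} prior removal at an active vertex as a deposit, not only grow-removals: $\Phi=\beta\sum_{v\text{ active}}\bigl(\deg_G(v)-\deg_H(v)\bigr)$, where a vertex is active if it is untrimmed and in a component that can still grow. Then each removal of an edge $(u,v)$ during a failed growth step (with $u$ in the failing component, hence going inactive) has amortized cost at most $1+\beta$, because only $v$'s potential can increase and only by $\beta$. Each trimming step at $u$ has amortized cost at most $\alpha\deg_G(u)(1+\beta)-(1-\alpha)\beta\deg_G(u)$, since $u$'s potential $\ge(1-\alpha)\beta\deg_G(u)$ is released; with $\alpha=\tfrac13$ this is nonpositive once $\beta\ge\alpha/(1-2\alpha)=1$. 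Summing amortized costs and using $\Phi\ge0$ gives $R_{\mathrm{grow}}+R_{\mathrm{trim}}\le(1+\beta)R_{\mathrm{grow}}=2R_{\mathrm{grow}}$, i.e.\ $R_{\mathrm{trim}}\le R_{\mathrm{grow}}$. The crucial difference from your bank is that when $w$ is trimmed and $(v,w)$ is removed, $v$'s potential goes \emph{up} (this is part of the amortized cost of trimming $w$, paid by $w$), so $v$'s accumulated credit correctly reflects trim-removals as well as grow-removals; your $B_v$, restricted to grow-removals, does not. If you want to salvage the injection picture, you would have to let $B_v$ collect credits of both kinds with weight $\beta$ and prove the resulting global bank stays nonnegative --- which is precisely the potential argument in disguise.
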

\begin{proof}
We use a simple amortization argument. We say that a vertex~$v$ is \emph{active} if $v$ is part of a component in~$G'$ that can still grow and if it was not trimmed yet. Suppose that a vertex is trimmed when $\deg_H(v)\le \alpha \deg_G(v)$ for some $\alpha<\frac{1}{2}$. (We use $\alpha=\frac{1}{3}$.) We define the \emph{potential} of an active vertex~$v$ to be $\beta(\deg_G(v)-\deg_H(v))$, i.e., $\beta$ times the number of edges of~$v$ that were deleted, for some $\beta>0$ to be chosen later. The amortized cost of deleting an edge $(u,v)$
following a failed growth step, where $u$ belongs to the component that failed to grow, is at most $1+\beta$, where~$1$ is the actual cost of deleting the edge, and $\beta$ is the increase of potential of~$v$ if it remains active. When a vertex~$u$ is trimmed, its potential is at least $(1-\alpha)\beta\deg_G(u)$. We want this potential to pay for the amortized cost of deleting the at most $\alpha \deg_G(v)$ remaining edges of~$u$, i.e., $(1-\alpha)\beta \ge \alpha(1+\beta)$, which is satisfied if we choose $\beta=\frac{\alpha}{1-2\alpha}$. For $\alpha=\frac{1}{3}$ we have $\beta=1$. It follows that the total number of edges deleted is at most twice the number of edges deleting following failed growth steps.
\end{proof}

Thus, it is enough to bound the number of edges removed during Growing steps.
Moreover, during Growing steps we can always assume that the degrees of relevant vertices in~$H$ are the same as their degrees in $G$ up to a multiplicative constant. Up to a constant factor, we can therefore assume that an edge $(u,v)$ is sampled with probability $k/\deg_H(u)$.

\subsection{High level details}\label{sub-high-level}

During growth steps, we distinguish between two types of components.
\begin{itemize}
\item {\bf High-degree:} A component~$A$ that contains a vertex $v\in A$ such that $\deg_{H}(v) \geq 2|A|$.
\item {\bf Low-degree:} A component~$A$ such that for every vertex $v\in A$ we have $\deg_{H}(v) < 2|A|$.
\end{itemize}

High-degree components are easier to deal with when $k\ge c\log n$. Note that if $\deg_H(v)\ge 2|A|$, then at least half of the edges of~$v$ leave the component~$A$. Thus, a random edge of~$v$ leaves~$A$ with a probability of least~$1/2$.

To grow a high-degree component $A$ we pick some vertex $v\in A$ with $\deg_{H}(v) \geq 2|A|$ and repeatedly sample its edges until an edge connecting it to a different connected component is found. This is a slight deviation from the framework of Section~\ref{sub-overview}, as we might exceed the sampling probabilities of certain edges. We show, however, that this only happens with a very small probability, and is therefore not problematic. (See the formal argument after Lemma~\ref{smpls1}.)

The main challenge, where novel ideas need to be used, is in the growth of low-degree components.

To grow a low-degree component~$A$, we sample each edge adjacent to a vertex $v\in A$ with probability
$C\cdot\frac{k}{\sqrt{\deg_H(v)|A|}}$,
where $C$ is a constant to be chosen later.
As the size of the component containing a vertex~$v$ at least doubles between two consecutive growth steps in which the component of~$A$ participates, it is easy to check that the total sampling probability is not exceeded.

\subsection{Growth of high-degree components}\label{sub-high}

When $k\ge c\log n$, high-degree components can be easily grown using a simple technique.

To grow a high-degree component $A$ we pick a vertex $v\in A$ with $d=\deg_{H}(v) \geq 2|A|$ and perform rounds in which we sample each edge of~$v$ with probability~$1/d$. We do that until an edge connecting~$A$ to another component is sampled. The next lemma shows that with high probability no sampling probability is exceeded, assuming that $k\ge c\log n$.

\begin{lemma}\label{smpls1}
With high probability, the over all probability in which an edge touching a vertex of degree~$d$ is sampled during all growth steps of high-degree components is $O((\log n)/d)$.
\end{lemma}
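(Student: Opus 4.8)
The plan is to show that for each vertex $v$ of degree $d$ (in $H$, which during growth steps is $\Theta(\deg_G(v))$), the total sampling probability invested in edges touching $v$ while growing high-degree components is $O((\log n)/d)$ with high probability. The key observation is that each growth step of a high-degree component sampling from vertex $v$ runs independent rounds, in each of which every edge of $v$ is sampled with probability $1/d$, and the step \emph{stops} as soon as one sampled edge leaves the current component $A$. Since $A$ is high-degree via $v$, i.e. $d=\deg_H(v)\ge 2|A|$, at least $d/2$ of the $d$ edges of $v$ lead outside $A$, so each round succeeds (samples a leaving edge) with probability at least $1-(1-1/d)^{d/2}\ge 1-e^{-1/2}$, a constant bounded away from $0$. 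Hence the number of rounds spent on $v$ in a single growth step is stochastically dominated by a geometric random variable with constant success probability, so it is $O(\log n)$ with probability $1-n^{-\Omega(1)}$, and the total sampling probability charged to $v$ in that step is (number of rounds)$\times(1/d)=O((\log n)/d)$.

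Next I would account for the fact that $v$'s component participates in at most $\log n$ successful growth steps over the whole process, because at each growth step we pick a \emph{smallest} component not yet complete in $H$, so a successful step at least doubles the size of the component containing $v$; a high-degree growth step is always successful (it loops until it connects). Summing the $O((\log n)/d)$ bound over the at most $\log n$ growth steps involving $v$ would naively give $O((\log^2 n)/d)$, so to get the claimed $O((\log n)/d)$ I would instead argue more carefully: across \emph{all} high-degree growth steps that ever use vertex $v$ as the designated high-degree vertex, the sequence of rounds forms a single sequence of independent Bernoulli-$1/d$ trials on $v$'s edges, and what we need is that the total number of rounds summed over these at most $\log n$ steps is $O(\log n)$ whp. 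Each step contributes an independent geometric(constant) number of rounds, and the sum of $O(\log n)$ independent geometric variables with constant parameter concentrates around its mean $O(\log n)$; a Chernoff-type bound for sums of geometrics gives that this sum exceeds $\Theta(\log n)$ only with probability $n^{-\Omega(1)}$. Then a union bound over the $n$ vertices shows that with high probability \emph{every} vertex $v$ of degree $d$ has total high-degree-growth sampling probability $O((\log n)/d)$.

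The main obstacle is making the ``single sequence of Bernoulli trials'' picture rigorous across different growth steps that are interleaved with the rest of the process: the component boundary of $A$ changes between steps, and the set of $v$'s edges that count as ``leaving $A$'' shrinks as $A$ grows, so the per-round success probability is only guaranteed to be $\ge 1-e^{-1/2}$ \emph{conditioned on $v$ still being the high-degree vertex of its component at that step}. I would handle this by a coupling/domination argument: regardless of the adaptive choices made by the process (which component is grown, in what order, which edges were already revealed), the number of fresh rounds run on $v$ in a step is dominated by a geometric variable with parameter $1-e^{-1/2}$, because whenever we run a round the current $A$ still satisfies $\deg_H(v)\ge 2|A|$ by the high-degree condition, and the edges already revealed in earlier rounds of earlier steps only help (an already-sampled leaving edge would have ended that earlier step). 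This lets me treat the total round count as dominated by a sum of $\le \log n$ i.i.d. geometrics independent of the adversarial scheduling, and then the Chernoff bound and union bound go through. A secondary point to check is that $k\ge c\log n$ is what makes ``exceeding the individual edge sampling probability $p_e=k/d$'' a low-probability event — here the total probability we put on a single edge $e$ of $v$ is at most (number of rounds)$/d = O((\log n)/d)$, which is below $p_e=k/d$ precisely when $k = \Omega(\log n)$, so the deviation from the round-sampling framework of Section~\ref{sub-overview} is absorbed into the high-probability failure event and contributes only $o(n/k)$ in expectation.
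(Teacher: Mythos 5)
Your proposal matches the paper's argument: both bound the per-round probability of finding a leaving edge by a constant (you use $1-(1-1/d)^{d/2}\ge 1-e^{-1/2}$ directly, the paper uses $\frac12(1-(1-1/d)^d)>\frac14$), both observe that $v$ participates in at most $\log n$ high-degree growth steps by the doubling argument, both dominate the total round count by a sum of $\log n$ geometrics (i.e.\ a negative binomial), and both finish with a Chernoff-plus-union bound and the observation that $k\ge c\log n$ absorbs the $n^{-\Omega(1)}$ event of exceeding sampling probabilities. The domination/coupling subtlety you flag is precisely what the paper compresses into the phrase ``stochastically dominated by a negative binomial,'' so your treatment is the same proof with that step spelled out.
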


\begin{proof}
Let $v$ be a vertex of degree~$d$. In each sampling round in the growth of a high-degree component in which~$v$ is the chosen highest degree vertex, we sample each edge adjacent to~$v$ with probability $1/d$. The probability that at least one edge is sampled in such a round is at least $1-(1-\frac{1}{d})^d>1-\frac{1}{\ee}>\frac{1}{2}$. If an edge is sampled, then with probability at least~$1/2$ it leaves the component and the process stops. Thus, each round is successful with a probability of at least~$1/4$.

\newcommand{\PP}{{\mathbb{P}}}

Each time~$v$ belongs to a component that undergoes a growth step, the size of the component of~$v$ doubles. Thus, $v$ participates in at most $\log n$ such steps. The total number of $1/d$ sampling rounds in which $v$ participates, over all growth steps, is thus stochastically dominated by a negative binomial random variable $T\sim NB(\log n,\frac{1}{4})$, the number of times a coin that comes up ``head'' with probability $1/4$ needs to be flipped until seeing $\log n$ ``heads''. We want to bound $\PP[T\ge 32\log n]$.

Let $X\sim B(32\log n,\frac{1}{4})$ be a binomial random variable that gives the number of ``heads'' in $32\log n$ throws of a coin that comes up ``head'' with probability $\frac{1}{4}$. Clearly $\PP[T\ge 32\log n]=\PP[X\le \log n]$.

By Chernoff's bound, if $X$ is a binomial random variable with $\mu=\EE[X]$ and $0<\delta<1$, then $\PP[X\le (1-\delta)\mu]\le \ee^{-\delta^2\mu/2}$. In our case $\mu=8\log n$ and $\delta=\frac{7}{8}$. Thus, $\PP[X\le \log n] \le \ee^{-(\frac{7}{8})^2\cdot 4\log n}\le \ee^{-3\log n}\le n^{-3}$. (Note that $\log n = \log_2 n > \ln n$.)

Thus by a simple union bound, the probability that any vertex, and hence any directed edge, participates in more than $32\log n$ rounds is at most $n^{-2}$.
\end{proof}

When $k\ge 32\log n$,  then $(32\log n)/d \le k/d$, so the probability of exceeding any sampling probability during the growth of high-degree components is at most $n^{-2}$. This is much more than what we need.

With a very low probability, our scheme might exceed the allowed sampling probability of certain edges. This is a slight deviation from the framework of Section~\ref{sub-overview}. This is justified as follows. Let~$S$ be the event that no sampling probability is exceeded. By Lemma~\ref{smpls1}, we have that $\mathbb{P}[S]\ge 1-n^{-2} \ge 1/2$. We prove below that $\EE[X]=O(n/k)$, where $X$ is the number inter-component edges. To prove Theorem~\ref{T-variant}, we need to show that $\EE[X|S]=O(n/k)$. However, as $\EE[X]\ge \mathbb{P}[S]\cdot\EE[X|S]$, we have $\EE[X|S]\le \EE[X]/\mathbb{P}[S]\le 2\,\EE[X]=O(n/k)$, as required.

\subsection{Main challenge: Growth of low-degree components}\label{sub-low}

To grow a low-degree component $A$ we sample each edge adjacent to a non-trimmed vertex $v\in A$ with probability
$C\cdot\frac{k}{\sqrt{\deg_H(v)|A|}}$,
where $C$ is a constant to be chosen later.
\begin{lemma}\label{smpls2}
The probability with which an edge $e$ adjacent on vertex~$v$ is sampled from~$v$ during growth steps of low-degree components is $O\left(\frac{k}{\deg_{G}(v)}\right)$.
\end{lemma}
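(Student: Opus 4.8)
The plan is a direct summation over the (at most logarithmically many) low-degree growth steps in which $v$ is a sampled vertex, combining two structural facts with a geometric series. Fix a vertex~$v$ and an edge~$e$ incident to~$v$. The edge~$e$ is sampled from~$v$ only during growth steps in which the target low-degree component~$A$ contains~$v$, the vertex~$v$ has not been trimmed, and $e$ lies in the cut $\partial_H(A)$; in such a step $e$ is sampled with probability $C k/\sqrt{\deg_H(v)\,|A|}$. By a union bound it therefore suffices to show that the sum of these per-step probabilities, over all such steps, is $O(k/\deg_G(v))$.

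\textbf{Step 1: a single low-degree step already forces $|A|$ to be large.} When $A$ is low-degree we have $\deg_H(v)<2|A|$ for every $v\in A$. On the other hand, since trimming has stabilized before the step, $v$ has not been trimmed, and $v$ has an edge of~$H$ leaving its $G'$-component, the trimming invariant of Section~\ref{sub-trim} gives $\deg_H(v)\ge\frac13\deg_G(v)$. Combining the two inequalities yields $|A|>\frac16\deg_G(v)$; in particular $\deg_G(v)>k$ (otherwise every edge incident to~$v$ would already lie in $G_1\subseteq G'$, so no such edge could belong to $\partial_H(A)$), which also ensures the per-step probability is at most~$1$ once $C$ is chosen small enough.

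\textbf{Step 2: the relevant component sizes grow geometrically.} Let $A_1\subsetneq A_2\subsetneq\cdots$ be the components containing~$v$ at the successive growth steps in which $v$'s component is chosen as the target. Because we always pick a smallest component that is not a complete component of~$H$, and because a complete (``red'') component has no edges of~$H$ leaving it and hence cannot receive a freshly sampled edge from~$A_j$, a successful step merges $A_j$ only with other not-yet-complete (``green'') components, each of size at least $|A_j|$; hence $|A_{j+1}|\ge 2|A_j|$, and once a step fails the component becomes complete in~$H$ and is never chosen as a target again. Passing to the sub-sequence of those steps at which this component is additionally low-degree (the only ones contributing here) preserves the doubling, and Step~1 applied to the first such step gives $|A_1|>\frac16\deg_G(v)$, so $|A_j|\ge 2^{j-1}|A_1|>2^{j-1}\cdot\tfrac16\deg_G(v)$.

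\textbf{Step 3: summation.} Using $\deg_H(v)\ge\frac13\deg_G(v)$ in every term together with the bound on $|A_j|$ from Step~2,
\[
  \sum_{j\ge 1} \frac{Ck}{\sqrt{\deg_H(v)\,|A_j|}}
  \;\le\; C\sqrt{18}\,\frac{k}{\deg_G(v)}\sum_{j\ge 1} 2^{-(j-1)/2}
  \;=\; \frac{C\sqrt{18}}{1-1/\sqrt2}\cdot\frac{k}{\deg_G(v)}
  \;=\; O\!\left(\frac{k}{\deg_G(v)}\right),
\]
which is the claimed bound.

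I expect the only real subtlety to be Step~2: one must argue carefully that the target component genuinely at least doubles on a successful step, which rests on the observation that the other endpoints of the sampled cut edges all lie in green components (complete components offer no $H$-edges at all), and that a component that fails a growth step never reappears as a target. The degree and trimming bookkeeping in Step~1, and the per-step upper bound of~$1$ on the sampling probability, are routine given the trimming invariant already established in Section~\ref{sub-trim}.
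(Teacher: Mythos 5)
Your proof is correct and follows essentially the same route as the paper: the trimming invariant $\deg_H(v)\ge\tfrac13\deg_G(v)$ and the low-degree condition $\deg_H(v)<2|A|$ give $|A_0|>\tfrac16\deg_G(v)$, and the doubling of component sizes yields a convergent geometric series. The only differences are cosmetic: your Step~2 justifies the doubling $|A_{j+1}|\ge 2|A_j|$ more explicitly than the paper (which simply asserts $|A_i|\ge 2^i|A_0|$), and your parenthetical about choosing $C$ ``small enough'' is unnecessary and sits slightly at odds with later lemmas requiring $C\ge 6$, though this does not affect the argument.
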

\begin{proof}
Let $A_0\subset A_1\subset\cdots\subset A_{r-1}$ be all the low-degree components containing $v$ throughout the entire process until~$v$ is trimmed. Let~$H_i$ be the graph~$H$ when component~$A_i$ is grown. Note that $\deg_{H_i}(v)\ge \frac{1}{3}\deg_G(v)$, for every~$i$.
The probability with which~$e$ is sampled when growing~$A_i$ is $C\cdot\frac{k}{\sqrt{\deg_{H_i}(v)|A_i|}} = O( \frac{k}{\sqrt{\deg_{G}(v)|A_i|}}) $. We also have $|A_i|\ge 2^i |A_0|$.
As $A_0$ is a low-degree component we have $\deg_{H_0}(v)<2|A_0|$. Thus, the total probability with which~$e$ is sampled from $v$ is
\[ O\left(
\sum_{i=0}^{r-1} \frac{k}{\sqrt{\deg_{G}(v)|A_i|}}
\right) \;=\; \frac{k}{\deg_G(v)} O\left(\sum_{i=0}^{r-1} 2^{-i/2}\right) \;=\;
O\left( \frac{k}{\deg_G(v)} \right) \;. \]
\end{proof}

\newcommand\boundary{\partial}
For a component $A$, denote by $|\boundary_H A|$ the number of edges in
$H$ in the cut defined by $A$.

We need to show that $\mathbb{E}[|G\setminus H|] = O(n/k)$. As a warm-up,
we show that $\mathbb{E}[|G\setminus H|] = O((n\log n)/k)$.


%
%

\smallskip
\begin{proof}[That $\mathbb{E}[|G\setminus H|] = O((n\log n)/k)$.]
Let $A$ be the low-degree component that we are currently growing. We are sampling each edge, and in particular each edge of the cut, with probability
$C\cdot\frac{k}{\sqrt{\deg_H(v)|A|}}
>C\frac k{2|A|}$. If none of the edges of the cut $\partial_H(A)$ is sampled, then all the edges in the cut are added to~$H$. The \emph{expected} number of edges added to~$H$, divided by~$|A|$ is:
\[ \frac{|\partial_H(A)|}{|A|}\left(1-C\frac k{2|A|}\right)^{|\boundary_H(A)|} \;<\;
\frac{|\partial_H(A)|}{|A|} \exp\left(- \frac{Ck}{2}\frac{|\partial_H(A)|}{|A|}\right)\;.\]
The function $f(x)=x\ee^{-ax}$ is maximized at $x_0=\frac{1}{a}$ and its maximum value is $\frac{1}{\ee a}$. Thus, the maximum value of the expression above is $\frac{2}{\ee Ck}<\frac{1}{Ck}$.

Thus to cover this expected cost, each vertex only needs to pay $\frac{1}{Ck}$ each time it participates in a growing component. As this happens at most $\log n$ times, the total cost per vertex is at most $\frac{\log n}{Ck}$, and the total cost for all vertices is $\frac{n\log n}{Ck}$.
\end{proof}

\newcommand{\maxdeg}{d}

To prove that $\mathbb{E}[|G\setminus H|] = O(n/k)$, we need a much more
elaborate argument. We start with several definitions.

\begin{definition}[Maximum degree]
The \emph{maximum degree} $\maxdeg_H(A)$ of a component~$A$ is defined to be
\[ \maxdeg_H(A) \;:=\; \max_{v\in A}\deg_{H}(v) \;. \]
We also let $v(A)$ be a vertex of maximum degree in~$A$. (Ties broken arbitrarily.)
\end{definition}

\begin{definition}[Density]
The \emph{density} $\rho(A)$ of a component~$A$ is defined to be
\[ \rho(A)
\;:=\; |\partial_H A| / \Bigl(\frac{|A|}{k}\Bigr) = \frac{k|\partial_H A|}{|A|}\;. \]
\end{definition}

\begin{definition}[Density level of a component]
  The \emph{density level $\ell(A)$} of a component $A$ is defined to be the unique integer~$\ell$ such that $\rho(A) \in [2^\ell, 2^{\ell+1})$.
\end{definition}

\begin{definition}[Density level of a vertex]
  Let $v\in V$ and let $A_1\subset A_2\subset \cdots \subset A_r$ be all the low-degree components to which $v$ belonged so far. The \emph{level} $\ell(v)$ is defined to be
  \[ \ell(v) \;:=\; \min_{1\leq i \leq r} \ell(A_i)\;,\]
  where $\ell(A_i)$ is the level of~$A_i$ when it participated in a growth step. If $v$ did not participate yet in a low-degree component, then $\ell(v)=\infty$. Note that $\ell(v)$ 
  cannot increase. 
\end{definition}


\begin{definition}[Cost of a component]
Let $A$ be a low-degree component currently participating in a growth step.
We define the \emph{cost} of $A$ to be
\[
cost(A) \;:=\;
2^{\ell(A)} \frac{|A|}{k}\exp \left(
 -C 2^{\ell(A)} \left( \frac{|A|}{\maxdeg_H(A)} \right)^{1/2}
\right)
\;.
\]
\end{definition}

The following simple technical lemma is used several times in what follows.

\begin{lemma}\label{middleschoolmath}
The function $f(x)=x^\beta e^{-a x^{\gamma}}$, where $a,\beta,\gamma>0$, attains its maximum value at the point $x_0=(\frac{\beta}{a\gamma})^{\frac{1}{\gamma}}$
and is decreasing for $x\geq x_0$.
\end{lemma}
\begin{proof}
The claim is immediate as
\[
\frac{\partial f}{\partial x} = \beta x^{\beta -1} e^{-ax^{\gamma }} - x^\beta  \cdot a \gamma  x^{\gamma -1} e^{-ax^{\gamma }}
= x^{\beta -1} e^{-ax^{\gamma }} \left( \beta  - a\gamma x^{\gamma } \right) \;. \]
\end{proof}


\begin{lemma}\label{cutcost}
Let $A$ be a low-degree component. Then, the expected number of edges removed from~$H$ during the growth step of~$A$ is at most $\rho(A)\cdot \frac{|A|}{k}\cdot \exp \left(
 -C \rho(A) \left( \frac{|A|}{\maxdeg_H(A)} \right)^{1/2} \right)$.
\end{lemma}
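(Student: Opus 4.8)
The plan is to compute the expected number of edges removed during the growth step of a low-degree component $A$ directly, and then bound the resulting expression using the definitions. First I would recall the setup: when we grow a low-degree component $A$, every edge incident to a non-trimmed vertex $v\in A$ is sampled from $v$ with probability $C\cdot k/\sqrt{\deg_H(v)\,|A|}$. Edges are removed from $H$ precisely when the growth step \emph{fails}, i.e. when no edge of the cut $\partial_H(A)$ is sampled, and in that case exactly $|\partial_H A|$ edges are removed. So the expected number of removed edges equals $|\partial_H A|\cdot\Pr[\text{no cut edge sampled}]$.

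Next I would bound $\Pr[\text{no cut edge sampled}]$. Since by the trimming invariant $\deg_H(v)\ge\frac13\deg_G(v)$ for every still-relevant vertex, and since $A$ is low-degree so $\deg_H(v)<2|A|$ for all $v\in A$, each cut edge incident to $v$ fails to be sampled from $v$ with probability $1-C k/\sqrt{\deg_H(v)|A|}$, which is at most $\exp(-Ck/\sqrt{\deg_H(v)|A|})\le \exp(-Ck/\sqrt{\maxdeg_H(A)\,|A|})$. (I should be careful that sampling of distinct directed edges is independent, so the probability that none of the $|\partial_H A|$ cut edges is sampled is the product; I can also just use the directed endpoint inside $A$ for each cut edge, which suffices.) Multiplying over the $|\partial_H A|$ cut edges gives
\[
\Pr[\text{no cut edge sampled}] \;\le\; \exp\!\left(-C k\,\frac{|\partial_H A|}{\sqrt{\maxdeg_H(A)\,|A|}}\right).
\]

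Then I would rewrite this in terms of $\rho(A)=k|\partial_H A|/|A|$: we have $k\,|\partial_H A|/\sqrt{\maxdeg_H(A)\,|A|} = \rho(A)\cdot |A|/\sqrt{\maxdeg_H(A)\,|A|} = \rho(A)\,(|A|/\maxdeg_H(A))^{1/2}$. Hence the expected number of removed edges is at most
\[
|\partial_H A|\cdot \exp\!\left(-C\rho(A)\Bigl(\tfrac{|A|}{\maxdeg_H(A)}\Bigr)^{1/2}\right)
= \rho(A)\cdot\frac{|A|}{k}\cdot\exp\!\left(-C\rho(A)\Bigl(\tfrac{|A|}{\maxdeg_H(A)}\Bigr)^{1/2}\right),
\]
using $|\partial_H A| = \rho(A)|A|/k$ by definition of density. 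This is exactly the claimed bound.

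The only subtle point — and the step I would be most careful about — is the independence/direction bookkeeping: in the round-sampling model an undirected cut edge $\{u,w\}$ with $u\in A$, $w\notin A$ could in principle be sampled either from $u$ (with probability $\approx k/\sqrt{\deg_H(u)|A|}$) or from $w$, and one must not double count or wrongly assume independence across the two directions. The clean fix is to \emph{lower-bound} the sampling probability of each cut edge by the contribution of its endpoint in $A$ alone; since the directed-edge choices are mutually independent, the events ``cut edge $e$ not sampled from its $A$-endpoint'' are independent across $e\in\partial_H A$, and their intersection contains the event ``no cut edge sampled,'' giving the product bound above. Everything else is the elementary manipulation of the definitions of $\rho(A)$, $\maxdeg_H(A)$, and the inequality $1-x\le e^{-x}$.
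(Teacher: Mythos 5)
Your proof is correct and follows essentially the same route as the paper: lower-bound each cut edge's sampling probability from its $A$-endpoint by $Ck/\sqrt{\maxdeg_H(A)\,|A|}$, use $1-x\le e^{-x}$ and independence of the directed-edge choices to bound the probability of missing the whole cut, and rewrite $|\partial_H A|=\rho(A)|A|/k$. Your extra care about the direction bookkeeping makes explicit a point the paper leaves implicit, but it is the same argument.
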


\begin{proof} Let $d=\maxdeg_H(A)$ and let
 $q=C\cdot\frac{k}{\sqrt{d|A|}}$.
Every non-trimmed vertex $v\in A$ samples each of its incident edges with probability at least~$q$.
Thus, the probability of missing all cut edges is
\[
(1-q)^{\rho(A)\cdot \frac{|A|}{k}} \;\leq\; \exp\left(-q \rho(A)\cdot \frac{|A|}{k}\right) \;=\;
\exp \left(
 -C \rho(A) \left( \frac{|A|}{d} \right)^{1/2} \right) \;.
\]
The expected number of edges removed from~$H$ is thus at most
\[ \rho(A)\cdot \frac{|A|}{k}\cdot \exp \left(
 -C \rho(A) \left( \frac{|A|}{d} \right)^{1/2} \right) \;. \]
\end{proof}

We partition the growth steps of a low-degree component~$A$ into the following four types:

\begin{itemize}
  \item[] {\bf Type 0:} $\rho(A)\le 1$.
  \item[] {\bf Type 1:} $\maxdeg_H(A)\le 5|\partial_H A|$.
  \item[] {\bf Type 2:} $A$ is the first low-degree component of level at most~$\ell(A)$ that contains $v(A)$, a vertex of maximum degree in~$A$, and $\maxdeg_H(A)> 5|\partial_H A|$.
  \item[] {\bf Type 3:} $A$ is \emph{not} the first low-degree component of level at most~$\ell(A)$ that contains $v(A)$.
\end{itemize}

In types 1, 2 and 3 we assume that $\rho(A)>1$. In type~3 we may assume that $\maxdeg_H(A)> 5|\partial_H A|$, but we do not rely on it.

When growing a type 0 component~$A$, i.e., when $\rho(A)\le 1$, we do not actually try to grow~$A$. We simply remove all the edges of $\partial_H A$ from~$H$.

\begin{lemma}\label{T0-total} The total number of edges removed from~$H$ during growth steps of low-degree components of type~$0$ is at most $n/k$.
\end{lemma}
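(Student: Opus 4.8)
The statement is purely a counting argument, with no probability involved, so the plan is to track how many edges each type-0 growth step deletes and then argue that the vertex sets of all type-0 components encountered over the entire run are pairwise disjoint. Recall that $A$ has \emph{type 0} exactly when $\rho(A)=k|\partial_H A|/|A|\le 1$, i.e.\ $|\partial_H A|\le |A|/k$, and that in this case the growth step does no sampling at all: it simply deletes every edge of the cut $\partial_H A$ from $H$. So a single type-0 growth step on $A$ removes at most $|A|/k$ edges from $H$.

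The key structural observation is that after a type-0 step on $A$, the component $A$ is ``frozen''. Indeed, once $\partial_H A=\emptyset$, there are no edges of $H$ leaving the vertex set of $A$; hence $A$ is a complete connected component of $H$ (in the sense of Section~\ref{sub-overview}), so it is never chosen again in a Growth Step, and moreover no other component can ever merge into it, since a merge would require an $H$-edge across $\partial_H A$. Consequently, every vertex $v\in V$ lies in at most one type-0 component over the whole process: the first time the component of $v$ is processed as type~0, that component stops changing and $v$ never reappears in a later (type-0 or otherwise) growing component containing cut edges.

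Putting these together, if $A^{(1)},A^{(2)},\dots$ are the type-0 components processed during the run, their vertex sets are pairwise disjoint, so $\sum_j |A^{(j)}|\le n$. Therefore the total number of edges removed from $H$ during type-0 growth steps is at most
\[
\sum_j |\partial_H A^{(j)}| \;\le\; \sum_j \frac{|A^{(j)}|}{k} \;=\; \frac{1}{k}\sum_j |A^{(j)}| \;\le\; \frac{n}{k},
\]
which is the claimed bound. The only point requiring care — and the closest thing to an obstacle — is justifying the ``frozen'' claim cleanly from the definitions of Growth Step and of a complete connected component, in particular that a type-0 deletion makes $A$'s cut empty and that no subsequent merging can revive it; once that is nailed down, the disjointness of the type-0 vertex sets, and hence the final sum, is immediate.
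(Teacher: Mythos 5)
Your proof is correct and takes essentially the same approach as the paper: both bound each type-0 step by $|A|/k$ and then observe that, because a type-0 step freezes $A$ (it becomes a complete component of $H$, so it is never chosen again and no other component can merge into it), the type-0 components are pairwise disjoint, giving $\sum_j |A^{(j)}|/k \le n/k$. Your writeup just makes explicit the ``frozen'' argument that the paper states in a single (slightly garbled) sentence.
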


\begin{proof}
  Let $A_1,A_2,\ldots,A_r$ be all the low-degree type 0 components encountered throughout the process. As type 1 growth steps always fail, all these components are disjoint. Thus, the total number of edges removed from~$H$ is at most $\sum_{i=1}^r \frac{|A_i|}{k}\le \frac{n}{k}$.
\end{proof}

When growing a component~$A$ not of type~0, we sample each edge adjacent to a non-trimmed vertex of~$A$ with probability $C\frac{k}{\sqrt{\deg_H(v)|A|}}$. The next lemma justifies the definition of $cost(A)$.

\begin{lemma} 
Let $A$ be a low-degree component with $\rho(A)>1$. Then, the expected number of edges removed from~$H$ during the growth step of~$A$ is at most $cost(A)$.
\end{lemma}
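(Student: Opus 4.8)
The plan is to combine Lemma~\ref{cutcost} with the unimodality of the scalar function $g(x)=x\,e^{-ax}$, exploiting the two structural facts available: that $A$ is low-degree, and that $\rho(A)>1$. By Lemma~\ref{cutcost}, writing $d=\maxdeg_H(A)$ and $s=\bigl(|A|/d\bigr)^{1/2}>0$, the expected number of edges removed during the growth step of $A$ is at most $\frac{|A|}{k}\,g(\rho(A))$, where $g(x)=x\exp(-C s\,x)$. Note also that $cost(A)=\frac{|A|}{k}\,g\!\bigl(2^{\ell(A)}\bigr)$ by definition. So it suffices to show $g(\rho(A))\le g\!\bigl(2^{\ell(A)}\bigr)$.

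First I would apply the special case $\beta=\gamma=1$ of Lemma~\ref{middleschoolmath}: $g$ attains its maximum at $x_0=1/(Cs)$ and is decreasing on $[x_0,\infty)$. Next I would check that $2^{\ell(A)}\ge x_0$. Since $\rho(A)\in[2^{\ell(A)},2^{\ell(A)+1})$ and $\rho(A)>1$, we cannot have $\ell(A)\le -1$, so $\ell(A)\ge 0$ and hence $2^{\ell(A)}\ge 1$. Since $A$ is low-degree, $\deg_H(v)<2|A|$ for every $v\in A$, so $d<2|A|$ and therefore $s>1/\sqrt 2$. Consequently $C s\,2^{\ell(A)}>C/\sqrt 2\ge 1$ provided $C\ge\sqrt 2$, which we are free to assume since $C$ is a constant to be fixed later; this is exactly $2^{\ell(A)}\ge x_0$. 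Since moreover $\rho(A)\ge 2^{\ell(A)}\ge x_0$, monotonicity of $g$ on $[x_0,\infty)$ gives $g(\rho(A))\le g\!\bigl(2^{\ell(A)}\bigr)$.

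Putting the pieces together,
\[
\EE[\,\text{edges removed when growing }A\,]\;\le\;\frac{|A|}{k}\,g(\rho(A))\;\le\;\frac{|A|}{k}\,g\!\left(2^{\ell(A)}\right)\;=\;2^{\ell(A)}\,\frac{|A|}{k}\,\exp\!\left(-C\,2^{\ell(A)}\left(\frac{|A|}{\maxdeg_H(A)}\right)^{1/2}\right)\;=\;cost(A),
\]
which is the claimed bound.

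I do not expect a real obstacle here: the only point requiring care is verifying that the rounded density $2^{\ell(A)}$ already lies to the right of the peak of $g$, so that replacing $\rho(A)$ by the smaller value $2^{\ell(A)}$ can only increase the expression. This is precisely where both hypotheses enter — $\rho(A)>1$ forces $\ell(A)\ge 0$, and low-degreeness bounds $\maxdeg_H(A)<2|A|$ — together with the harmless choice of a sufficiently large constant $C$. Everything else is a direct substitution into the definition of $cost(A)$.
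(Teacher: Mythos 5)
Your proof is correct and follows essentially the same route as the paper's: apply Lemma~\ref{cutcost} to get the bound in $\rho(A)$, then use Lemma~\ref{middleschoolmath} with $\beta=\gamma=1$ to show the map $x\mapsto x\,e^{-Csx}$ is decreasing for $x\ge 1$ (using $d<2|A|$ from low-degreeness and $C\ge\sqrt 2$), and finally replace $\rho(A)$ by the smaller value $2^{\ell(A)}\ge 1$. The paper phrases the peak condition as $x_0\le 1$ rather than $2^{\ell(A)}\ge x_0$, but this is the identical argument with minor notational differences.
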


\begin{proof} Let $d=\maxdeg_H(A)$. By Lemma~\ref{cutcost}, the expected number of edges removed while growing~$A$ is at most
\[
\rho(A)\cdot \frac{|A|}{k}\cdot \exp \left(
 -C \rho(A) \left( \frac{|A|}{d} \right)^{1/2} \right)
\;\le\;
 2^{\ell(A)}\cdot \frac{|A|}{k}\cdot \exp \left(
 -C 2^{\ell(A)} \left( \frac{|A|}{d} \right)^{1/2} \right)
 \;=\; cost(A)
 \;,\]
where the inequality follows from Lemma~\ref{middleschoolmath} as $\rho(A)\ge 2^{\ell(A)}\ge 1$. (Use the lemma with $x=\rho(A)$, $\beta=\gamma=1$ and $a=C(\frac{|A|}{d})^{1/2}$. Then, $x_0=\frac{1}{C}(\frac{d}{|A|})^{1/2}$. Note that $\frac{d}{|A|}\le 2$, so for $C\ge \sqrt{2}$ we have $x_0\le 1$, the function is decreasing for $x\ge 1$, and the inequality follows.)
\end{proof}

We next bound the expected cost of all growth steps of types 1,2 and 3.

\begin{lemma}\label{chargeall}
If $A$ is a low-degree component and $\maxdeg_H(A)\le 5 |\partial_H A|$, then
$\frac{cost(A)}{|A|}=o(\frac{1}{k \log n})$.
\end{lemma}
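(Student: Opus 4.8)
The plan is to substitute the type-1 hypothesis $\maxdeg_H(A)\le 5|\partial_H A|$ directly into the definition of $cost(A)$ and then exploit $k\ge c\log n$. Recall that this lemma is applied only to components arising in type-1 growth steps, for which $\rho(A)>1$. Write $d=\maxdeg_H(A)$ and $t=2^{\ell(A)}$, so that $t\ge 1$ and $t\le \rho(A)<2t$ by the definition of the density level. Since $\rho(A)=k|\partial_H A|/|A|$, the hypothesis $d\le 5|\partial_H A|$ rewrites as $d\le 5\rho(A)|A|/k$, hence
\[ \frac{|A|}{d}\;\ge\;\frac{k}{5\rho(A)}\;>\;\frac{k}{10\,t}\;. \]
Consequently the exponent appearing in $cost(A)$ satisfies $C\,t\,(|A|/d)^{1/2}>C\,t\,(k/(10t))^{1/2}=C\,(tk/10)^{1/2}$, and therefore, unfolding the definition of $cost(A)$,
\[ \frac{cost(A)}{|A|}\;=\;\frac{t}{k}\,\exp\!\Bigl(-C\,t\,(|A|/d)^{1/2}\Bigr)\;\le\;\frac{t}{k}\,\exp\!\Bigl(-C\sqrt{tk/10}\Bigr)\;. \]

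Next I would maximise the right-hand side over $t\ge 1$. The function $t\mapsto t\,\exp\bigl(-C\sqrt{k/10}\,\sqrt{t}\bigr)$ has exactly the form treated in Lemma~\ref{middleschoolmath}, with $\beta=1$, $\gamma=\tfrac12$, $a=C\sqrt{k/10}$; its maximiser is $t_0=(2/a)^2=40/(C^2k)$, which is below $1$ as soon as $k>40/C^2$, and in particular for all sufficiently large $n$ since $k\ge c\log n$. Hence the function is decreasing on $[1,\infty)$, so it is at most its value at $t=1$, giving
\[ \frac{cost(A)}{|A|}\;\le\;\frac{1}{k}\,\exp\!\Bigl(-C\sqrt{k/10}\Bigr)\;. \]
Finally, $k\ge c\log n$ gives $\exp(-C\sqrt{k/10})\le\exp\bigl(-C\sqrt{c(\log n)/10}\bigr)=\exp\bigl(-\Theta(\sqrt{\log n})\bigr)$, and since $\sqrt{\log n}$ grows faster than $\ln\log n$, this quantity is $o(1/\log n)$. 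Therefore $cost(A)/|A|\le\frac1k\cdot o(1/\log n)=o\!\bigl(\tfrac1{k\log n}\bigr)$, uniformly over all such components $A$, which is the claim.

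The computation is short; the one step I would flag as the crux is the very first inequality. The type-1 bound $\maxdeg_H(A)\le 5|\partial_H A|$ is exactly what is needed to trade the ``geometric-mean'' term $(|A|/\maxdeg_H(A))^{1/2}$ in $cost(A)$ for a term of order $(2^{\ell(A)}k)^{1/2}$; once this trade is made, the resulting $\exp(-\Theta(\sqrt{k}))$ factor beats the $1/\log n$ loss we can afford (each vertex joins at most $\log n$ growth steps), and the remainder is just the single-variable optimisation already packaged as Lemma~\ref{middleschoolmath}. The only thing one must not forget is that $\rho(A)>1$ on type-1 steps, so that $t=2^{\ell(A)}\ge 1$ and the maximiser $t_0$ really does fall to the left of the range under consideration.
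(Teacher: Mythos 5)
Your proof is correct and follows essentially the same route as the paper: rewrite the hypothesis as a lower bound on $|A|/\maxdeg_H(A)$ in terms of $k/2^{\ell(A)}$, substitute into $cost(A)/|A|$, and apply Lemma~\ref{middleschoolmath} with $x=2^{\ell(A)}$, $\beta=1$, $\gamma=\tfrac12$, $a=C\sqrt{k/10}$ to reduce to the value at $x=1$, then use $k\ge c\log n$ to get the $o(1/(k\log n))$ bound. The only differences are cosmetic (you call the variable $t$ rather than $x$, and you spell out the final asymptotic step slightly more).
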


\begin{proof} Let $d=\maxdeg_H(A)$. Then,
\[
\frac{|A|}{d} \;\ge\; \frac{|A|}{5 |\partial_H A|} \;=\;
\frac{k}{5\rho(A)} \;>\;
\frac{k}{5\, 2^{\ell(A)+1}} \;.
\]
Therefore,
\begin{align*}
\frac{cost(A)}{|A|} &\;=\;
\frac{2^{\ell(A)}}{k}  \exp \left(
 -C 2^{\ell(A)} \left( \frac{|A|}{d} \right)^{1/2}
\right) \\
&\le\;
\frac{2^{\ell(A)}}{k}  \exp \left(
 -C 2^{\ell(A)} \left( \frac{k}{5\, 2^{\ell(A)+1}} \right)^{1/2} \right) \\
 &\le\;
\frac{2^{\ell(A)}}{k}  \exp \left(
 -C\left(\frac{k}{10}\right)^{1/2}\cdot 2^{\ell(A)/2}  \right)
 \;=\; \frac{f(x)}{k} \;,
\end{align*}
where $f(x)=x^\beta {\rm e}^{-ax^\gamma}$, with
$x = 2^{\ell(A)}$, $\beta = 1$, $\gamma = \frac{1}{2}$ and
$a = C\left(\frac{k}{10}\right)^{1/2}$. By Lemma~\ref{middleschoolmath}, $f(x)$ is decreasing for $x\ge x_0$ where $x_0=(\frac{\beta}{a\gamma})^{1/\gamma}={\left(\frac{2}{C}\right)}^2\frac{10}{k}<1$.
Since $x\ge 1$, $f(x)<f(1)$ and thus
\[
\frac{cost(A)}{|A|} \;=\; \frac{f(x)}{k} \;\le \; \frac{f(1)}{k} \;\le\; \frac{1}{k}\exp\left(-C\left(\frac{k}{10}\right)^{1/2}\right) \;=\; o\left(\frac{1}{k\log n}\right)\;,
\]
as $k=c\log n = \omega((\log\log n)^2)$.
\end{proof}

We note that the constant 5 in the statement of the lemma is arbitrary, but this is what we will use below.

\begin{lemma}\label{T1-total} The total cost of growth steps of type $1$ is
$o(n/k)$ 
.
\end{lemma}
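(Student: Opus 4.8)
The goal is to bound the total expected cost $\sum_A cost(A)$, where the sum is over all type-1 growth steps. The plan is to charge $cost(A)$ to the vertices of $A$ and argue that each vertex pays a total of $o(1/k)$ over all its type-1 participations. By Lemma~\ref{chargeall}, a single type-1 step contributes $cost(A) = |A|\cdot o(\frac{1}{k\log n})$, i.e., each vertex $v\in A$ is charged $o(\frac{1}{k\log n})$ for this step. The key structural fact I would use is that each vertex participates in at most $\log n$ growth steps in total (since a smallest component at least doubles in size when grown, and a component can only grow into a vertex when that vertex's component grows — both phenomena are bounded by $\log n$ iterations per vertex, as noted in Section~\ref{sub-trim}). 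Hence a vertex is charged at most $\log n \cdot o(\frac{1}{k\log n}) = o(\frac{1}{k})$ in total across type-1 steps.

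First I would fix a transcript $T$ and enumerate the type-1 growth steps $A^{(1)}, A^{(2)}, \ldots$ in the order they occur. For each, invoke Lemma~\ref{chargeall}: since a type-1 component by definition satisfies $\maxdeg_H(A)\le 5|\partial_H A|$, we get $cost(A^{(j)})/|A^{(j)}| = o(\frac{1}{k\log n})$. Summing over the $\le \log n$ steps that any particular vertex $v$ belongs to, the total charge to $v$ is $o(\frac{1}{k})$. Summing over all $n$ vertices gives $\sum_j cost(A^{(j)}) \le \sum_v o(\frac{1}{k}) = o(\frac{n}{k})$. Finally, by Lemma~\ref{cutcost} (or the subsequent lemma bounding the expected number of removed edges by $cost(A)$), the expected number of edges removed from $H$ during a type-1 growth step $A^{(j)}$, conditioned on the transcript up to that point, is at most $cost(A^{(j)})$; so the expected total number of edges removed during type-1 steps is at most $\sum_j cost(A^{(j)}) = o(n/k)$, as claimed. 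Since this holds for every transcript $T$, it holds in expectation over $T$ as well.

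One subtlety I would be careful about: the quantity $o(\frac{1}{k\log n})$ in Lemma~\ref{chargeall} hides a factor of the form $\exp(-C\sqrt{k/10})$, which decays faster than any polynomial in $\log n$ when $k = c\log n$; so the bound $\log n \cdot o(\frac{1}{k\log n}) = o(\frac{1}{k})$ is comfortably valid — indeed there is slack to spare, which will presumably be needed elsewhere. I should also make sure the "at most $\log n$ growth steps per vertex" claim applies here: a vertex participates in a growth step either as a member of the component being grown, or by being absorbed into a growing component; in both cases its component size at least doubles, so $\log n$ is the right bound regardless of type.

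I do not expect a genuine obstacle here — this lemma is the "easy" case, handled purely by the per-step bound of Lemma~\ref{chargeall} together with the global doubling argument. The only thing requiring care is bookkeeping: making the charging scheme explicit (each edge/cost unit assigned to a vertex of $A$, each vertex paying $\le \log n$ times) and confirming that we may condition on the transcript so that Lemma~\ref{cutcost}'s bound on the expected number of removed edges is the relevant quantity. The harder cases (types 2 and 3), where a naive $\log n$-per-vertex charge is too lossy and one must exploit the density-level structure, are deferred to the subsequent lemmas.
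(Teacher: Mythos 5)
Your proof is correct and matches the paper's argument essentially verbatim: charge each vertex of a type-1 component $cost(A)/|A| = o(1/(k\log n))$ via Lemma~\ref{chargeall}, multiply by the at-most-$\log n$ growth steps per vertex, and sum over $n$ vertices to get $o(n/k)$. The extra care you take about conditioning on the transcript and verifying the asymptotics is sound but adds nothing beyond the paper's two-line proof.
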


\begin{proof}
  By Lemma~\ref{chargeall}, assigning a cost of 
  $\frac{cost(A)}{|A|}=o(\frac{1}{k \log n})$
  to each vertex of a component~$A$ participating in a growth step of type 1 covers the cost of this growth step. Since each vertex participates in at most $\log n$ such growth steps, the total cost is 
  $n\log n\cdot o(\frac{1}{k\log n}) = o(\frac{n}{k})$. 
\end{proof}

It remains to analyze growth steps of types 2 and 3.
We say that the level of a vertex $v\in A$ decreased to~$\ell(A)$, if $A$ is the first low-degree component containing~$v$ whose level is at most~$\ell(A)$. In particular, if~$A$ is of type~2, then the level of~$v(A)$ decreased to~$\ell(A)$.



\begin{lemma}\label{T2}
  If the growth step corresponding to~$A$ is of type $2$, then charging $\frac{2^{-\ell(A)}}{k}$ to every vertex of~$A$ whose level decreased to~$\ell(A)$ is enough to cover $cost(A)$.
\end{lemma}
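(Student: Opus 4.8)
The plan is to compare $cost(A)$ against the total amount $\frac{2^{-\ell(A)}}{k}$ charged to the vertices whose level decreased to $\ell(A)$ during this growth step. Let $\ell=\ell(A)$ and $d=\maxdeg_H(A)$, and recall that since $A$ is of type $2$ we have $d>5|\partial_H A|$, which in the notation of density reads $\frac{|A|}{d}<\frac{k}{5\rho(A)}\le\frac{k}{5\cdot 2^{\ell}}$. Unlike the type $1$ case, this bound goes the ``wrong'' way, so the exponential factor in $cost(A)$ cannot be used to beat the $2^\ell$ factor directly; instead the charge must be covered by the number of vertices whose level drops. So the key quantity to lower bound is the number of vertices of $A$ whose level decreased to $\ell$, and the key inequality to establish is
\[
cost(A)\;=\;2^{\ell}\,\frac{|A|}{k}\,\exp\!\left(-C2^{\ell}\Bigl(\tfrac{|A|}{d}\Bigr)^{1/2}\right)\;\le\;\frac{2^{-\ell}}{k}\cdot(\text{number of vertices whose level decreased to }\ell)\;.
\]

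First I would argue that if a vertex $v\in A$ already had level $\ell(v)\le\ell$ before this step, then it sits inside an earlier (sub)component $A'\subseteq A$ with $\ell(A')\le\ell$, i.e. with $\rho(A')<2^{\ell+1}$, hence $|\partial_H A'|<2^{\ell+1}|A'|/k$; since $A'$ appeared earlier and components only grow, such a $v$ contributes only few boundary edges relative to its ``share'' of $A$. The vertices whose level is still $>\ell$ (in particular those whose level decreases to $\ell$ exactly now) must therefore account for essentially all of $|\partial_H A|=\rho(A)|A|/k\ge 2^\ell |A|/k$ cut edges. Each such vertex $v$ has $\deg_H(v)\le d$, and since $A$ is low-degree also $\deg_H(v)<2|A|$, so it contributes at most $\min\{d,2|A|\}$ cut edges. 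Dividing, the number of vertices whose level decreased to $\ell$ is at least (a constant times) $\frac{|\partial_H A|}{\min\{d,2|A|\}}\ge\frac{2^\ell|A|/k}{\min\{d,2|A|\}}$. Plugging this lower bound into the right-hand side above, it suffices to check
\[
2^{\ell}\,\frac{|A|}{k}\,\exp\!\left(-C2^{\ell}\Bigl(\tfrac{|A|}{d}\Bigr)^{1/2}\right)\;\le\;\frac{2^{-\ell}}{k}\cdot\Omega\!\left(\frac{2^{\ell}|A|/k}{\min\{d,2|A|\}}\right),
\]
i.e. $2^{\ell}\exp(-C2^{\ell}(|A|/d)^{1/2})=O\bigl(2^{-\ell}/(k\cdot\min\{d,2|A|\})\bigr)$ after cancelling $|A|/k$. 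I would then split into the two cases $d\le 2|A|$ and $d>2|A|$ is impossible (low-degree forces $d<2|A|$), so only $\min=d$ occurs, and bound $d\le 5|A|\rho(A)/k< 5\cdot 2^{\ell+1}|A|/k$ is again the wrong direction — so instead I use $(|A|/d)^{1/2}>(k/(5\cdot 2^{\ell+1}))^{1/2}$, turning the exponential into $\exp(-C'\,2^{\ell/2}k^{1/2})$, which as in Lemma \ref{chargeall} dominates any fixed polynomial in $2^\ell$, $d\le n$, and $k=c\log n$; this is where the requirement $k\ge c\log n$ enters, since $\exp(-C'\sqrt{k})$ must beat a $\mathrm{poly}(n)$ factor.

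The main obstacle I expect is making the counting argument of the previous paragraph rigorous: precisely, showing that the vertices whose level did \emph{not} decrease to $\ell$ (because it was already $\le\ell$) contribute a negligible fraction of $|\partial_H A|$, so that a constant fraction of the cut is ``paid for'' by vertices whose level drops now. This requires carefully using that each such old low-level vertex lived in an earlier small-boundary component and that these earlier components are essentially disjoint or nested in a controlled way, together with the doubling of component sizes. Once that combinatorial claim is in hand, the remaining inequality is a routine application of Lemma \ref{middleschoolmath} exactly as in Lemma \ref{chargeall}, using that $\exp(-\Theta(\sqrt k))=\exp(-\Theta(\sqrt{\log n}))$ overwhelms the polynomial-in-$n$ prefactors. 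Summing the charge $\frac{2^{-\ell}}{k}$ over all density levels $\ell$ and over the at most one level-decrease per $(v,\ell)$ pair will, in a subsequent lemma, give a geometric series bounded by $O(1/k)$ per vertex and hence $O(n/k)$ in total.
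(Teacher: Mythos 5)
Your proposal does not work, and the gap is substantive: it misses the key combinatorial idea of the paper's proof and tries to substitute an exponential-domination argument that cannot succeed here.

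\textbf{The lower bound on the number of charged vertices is useless.} You try to lower bound the number of vertices whose level dropped to $\ell(A)$ by $|\partial_H A|/\min\{d,2|A|\}=|\partial_H A|/d$ (since for a low-degree component $d<2|A|$). But the very defining property of a type~2 component is $d>5|\partial_H A|$, so $|\partial_H A|/d<1/5$, i.e.\ your lower bound is \emph{less than one vertex}. It therefore cannot cover $cost(A)$, which can be large. In addition, the claim that vertices whose level was already $\le\ell$ contribute negligibly to $\partial_H A$ is not what the paper needs or proves; a vertex's level is a property of which components it has lived in, not of its incidence on the current cut, so the cut-counting decomposition you sketch does not directly say anything about which vertices had their level drop.

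\textbf{The exponential cannot rescue the argument.} You correctly observe early on that for a type~2 component the bound $|A|/d<k/(5\rho(A))\le k/(5\cdot2^\ell)$ goes the ``wrong'' way, yet in the final step you invoke $(|A|/d)^{1/2}>(k/(5\cdot2^{\ell+1}))^{1/2}$, which is exactly this inequality reversed and is false for type~2. Without it, all you have from low-degree-ness is $|A|/d\ge 1/2$, giving $\exp(-C2^\ell(|A|/d)^{1/2})\le e^{-\Theta(2^\ell)}$, which does not kill the factor $kd$ your computation leaves over (even for $\ell=0$). And even if the reversed inequality held, $\exp(-\Theta(\sqrt{k}))=\exp(-\Theta(\sqrt{\log n}))$ does \emph{not} beat $\mathrm{poly}(n)$, so the claim at the end of your argument that ``$\exp(-C'\sqrt{k})$ must beat a $\mathrm{poly}(n)$ factor'' is simply incorrect.

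\textbf{What the paper actually does.} The paper's proof uses the defining property of type~2 — that $A$ is the \emph{first} low-degree component of level $\le\ell(A)$ containing $v(A)$ — in a way your proof does not exploit. Looking at $v=v(A)$ of degree $d$, it shows (by a contradiction argument summing $|\partial B_i|$ over the earlier level-$\le\ell$ subcomponents $B_i\subset A$, using disjointness and $\sum|B_i|\le|A|$) that at most $2|\partial_H A|$ of $v$'s neighbors in $A$ had level $\le\ell$ already; combined with $d>5|\partial_H A|$ this yields at least $\tfrac{2}{5}d$ neighbors of $v$ whose level decreased to $\ell$. Charging each of these $\tfrac{cost(A)}{(2/5)d}$ and then applying Lemma~\ref{middleschoolmath} to $f(x)=x e^{-ax^{1/2}}$ with $x=|A|/d\ge 1/2$ gives exactly the $\tfrac{2^{-\ell}}{k}$ bound, with no need for the exponent to dominate polynomials — the constant $C$ alone suffices. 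This is a genuinely different mechanism from type~1: the cost is spread over $\Theta(d)$ vertices rather than being small per vertex via the exponential.
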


\begin{proof} Let $v=v(A)$ be a vertex of degree $d=\maxdeg_H(A)$ in~$A$. As the growth step is of type 2, we have $\ell(v)=\ell(A)$ and $d>5|\partial_H A|$.

We start by showing that at least half of the neighbors of $v$ in $A$
also had their level decreased to~$\ell=\ell(A)$ by~$A$. Assume, for the sake of contradiction, that the level of at least half of neighbors of~$v$ in~$A$ was at most~$\ell$ before~$A$ was formed. Let $B_1,B_2,\ldots,B_s$ be the maximal components of level at most~$\ell$ that are included in~$A$. Clearly $B_1,B_2,\ldots,B_s$ are disjoint and they must contain at least half of the neighbors of~$v$ in~$A$. The number of neighbors of~$v$ in these components is at most $\sum_{i=1}^s |\partial_{H_i}B_i| \le 2^{\ell+1} \sum_{i=1}^s \frac{|B_i|}{k}\le 2^{\ell+1}\frac{|A|}{k}\le 2\rho(A)\frac{|A|}{k}=  2|\partial_H A|$, since the levels of the $B_i$'s is at most~$\ell$, and the level of~$A$ is~$\ell$. Thus, the total number of neighbors of~$v$ in~$A$ is at most $4|\partial_A H|$, and the total number neighbors of~$v$, not necessarily in~$A$, is at most $5|\partial_A H|$, a contradiction.

Thus, $v$ has at least $\frac{1}{2}(d-|\partial_H A|)\ge \frac{2}{5}d$ neighbors in~$A$ whose level decreased to~$\ell$ by~$A$. To cover $cost(A)$, it is thus enough to charge each one of these vertices by
\[ \frac{cost(A)}{\frac{2}{5}d} \;=\; \frac{2^\ell}{\frac{2}{5}k} \frac{|A|}{d} \exp\left(-C 2^\ell \left(\frac{|A|}{d}\right)^{1/2}\right) \;=\;
\frac{2^\ell}{\frac{2}{5}k}\cdot f\left(\frac{|A|}{d}\right)\;,
\]
where $f(x)=x\ee^{-ax^{1/2}}$ and $a=C2^\ell$. By Lemma~\ref{middleschoolmath}, with $\beta=1$ and $\gamma=\frac{1}{2}$, $f(x)$ attains its maximum at $x_0=(\frac{\beta}{a\gamma})^{1/\gamma}=(\frac{2}{C2^\ell})^2$. If $C\ge 3$, then $x_0< \frac{1}{2}< |A|/d$, for every $\ell\ge 0$. Thus,
\[
\frac{cost(A)}{\frac{2}{5}d} \;\le\;
\frac{2^\ell}{\frac{2}{5}k} f(x_0) \;=\;
\frac{2^\ell}{\frac{2}{5}k}\Bigl(\frac{2}{C 2^\ell}\Bigr)^2 \ee^{-2} \;=\;
\frac{10}{(C\ee)^2}\cdot\frac{2^{-\ell}}{k} \;<\;
\frac{2^{-\ell}}{k}
\]

\end{proof}

\begin{lemma}\label{T2-total} The total cost of steps of type~$2$ is $O(n/k)$.
\end{lemma}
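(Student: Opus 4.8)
The plan is to combine Lemma~\ref{T2} with the fact that a vertex's level $\ell(v)$ is non-increasing in time and only ever attains \emph{non-negative} integer values at type-2 steps, so that the charges landing on any single vertex form (at most) a geometric series that sums to $O(1/k)$. Concretely, I would first invoke Lemma~\ref{T2} to cover $cost(A)$, for each type-2 growth step of a component $A$, by charging $2^{-\ell(A)}/k$ to every vertex $v\in A$ whose level decreased to $\ell(A)$ at that step. Since (by the lemma preceding Lemma~\ref{chargeall}) the expected number of edges removed during the growth of a type-2 component $A$ is at most $cost(A)$, the total cost of type-2 steps is at most $\sum_v \sum_{A} 2^{-\ell(A)}/k$, where the inner sum ranges over the type-2 components $A$ that charged $v$; this is just a reordering of the double sum.

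The key structural step is to bound, for a fixed vertex $v$, the inner sum $\sum_A 2^{-\ell(A)}/k$. The low-degree components containing $v$ form a chain $A_1\subsetneq A_2\subsetneq\cdots$ in the order in which they are grown, and $\ell(v)=\min_i \ell(A_i)$ is therefore non-increasing over time. By definition, $v$'s level ``decreases to $\ell(A_j)$'' at $A_j$ exactly when $A_j$ is the first component in this chain of level at most $\ell(A_j)$; hence the levels $\ell(A_j)$ taken over all components $A_j$ that charge $v$ are pairwise distinct. Moreover, every type-2 component satisfies $\rho(A)>1$, so $\ell(A)\ge 0$. Consequently the levels used to charge $v$ are distinct non-negative integers, and $\sum_A 2^{-\ell(A)}/k \le \frac{1}{k}\sum_{\ell\ge 0} 2^{-\ell} = \frac{2}{k}$.

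Summing over all $n$ vertices then gives that the total cost of type-2 steps is at most $2n/k = O(n/k)$, as claimed. I do not expect a genuine obstacle here: the only things needing care are the two structural facts — that $\ell(v)$ is monotone and that distinct charges to $v$ necessarily use distinct, non-negative levels — and both follow immediately from the definitions of density level of a component and of a vertex, together with the type-2 hypothesis $\rho(A)>1$. Everything else is the convergence of a geometric series.
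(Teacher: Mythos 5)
Your proposal is correct and follows essentially the same approach as the paper: invoke Lemma~\ref{T2} to charge $2^{-\ell(A)}/k$ to each vertex whose level decreases to $\ell(A)$, note that a given vertex's level decreases to any given (non-negative, because $\rho(A)>1$) level at most once, and sum the resulting geometric series to $2/k$ per vertex and $2n/k$ total. The extra justifications you spell out (monotonicity of $\ell(v)$, distinctness and non-negativity of the charged levels) are exactly the facts the paper's one-line proof relies on implicitly.
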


\begin{proof} By Lemma~\ref{T2}, charging every vertex whose level decreases to~$\ell$ a cost of $\frac{2^{-\ell}}{k}$ covers the cost of all growth steps of type~2. As the level of each vertex decreases to any given level at most once, we get that the total charge for a given vertex is at most $\sum_{\ell\ge 0} \frac{2^{-\ell}}{k}\le \frac{2}{k}$. The total charges of these form, for all vertices, is thus at most $\frac{2n}{k}$.
\end{proof}


\begin{lemma}\label{AB} Suppose that the growth step of~$A$ is of type $3$, i.e., the vertex~$v=v(A)$ of maximum degree in~$A$ is contained in a previous component~$B$ with $\ell(B)\le \ell(A)$. Then, $cost(A) \leq \frac{|B|}{|A|} cost(B)$.
\end{lemma}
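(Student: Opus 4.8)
The plan is to compare the two expressions $cost(A)$ and $cost(B)$ term by term, exploiting the three facts we have available: $v=v(A)$ lies in $B$, the level inequality $\ell(B)\le\ell(A)$, and the structural consequence of the growth process that $|B|\le |A|/2$ (since $A$ is a strictly later low-degree component containing $v$, and components at least double in size between consecutive growth steps in which $v$ participates; actually we only need $|B|\le|A|$, but the doubling is what we will really lean on). Recall
\[
cost(A)=2^{\ell(A)}\frac{|A|}{k}\exp\!\left(-C2^{\ell(A)}\Bigl(\tfrac{|A|}{\maxdeg_H(A)}\Bigr)^{1/2}\right),
\]
and similarly for $B$ with $H$ replaced by the graph $H_B$ at the time $B$ was grown. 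The key observation is that $\maxdeg_{H_B}(B)\ge\deg_{H_B}(v)\ge\deg_{H}(v)=\maxdeg_H(A)$, because $v$ is the maximum-degree vertex of $A$ and degrees in $H$ only decrease over time. Write $d:=\maxdeg_H(A)$ and $d_B:=\maxdeg_{H_B}(B)\ge d$.

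First I would reduce the claimed inequality $cost(A)\le \frac{|B|}{|A|}cost(B)$ to showing
\[
2^{\ell(A)}\exp\!\left(-C2^{\ell(A)}\Bigl(\tfrac{|A|}{d}\Bigr)^{1/2}\right)\;\le\;2^{\ell(B)}\exp\!\left(-C2^{\ell(B)}\Bigl(\tfrac{|B|}{d_B}\Bigr)^{1/2}\right),
\]
after cancelling the common factor $|B|/k$ on both sides (note $cost(A)$ carries $|A|/k$, and the target carries $(|B|/|A|)\cdot(|B|/k)$... more precisely we divide through and are left comparing $2^{\ell(A)}\frac1k e^{(\cdots)}$ against $\frac1k 2^{\ell(B)}e^{(\cdots)}$). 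Then I would handle it by considering the function $g(\ell,m,d'):=2^\ell\exp(-C2^\ell(m/d')^{1/2})$ and showing that moving from $(\ell(B),|B|,d_B)$ to $(\ell(A),|A|,d)$ can only decrease it. Decompose the move into steps: (i) increasing the density level from $\ell(B)$ to $\ell(A)$, and (ii) changing $(m,d')$ from $(|B|,d_B)$ to $(|A|,d)$. For step (ii), since $d\le d_B$ and $|A|\ge|B|$, the ratio $|A|/d\ge|B|/d_B$, so the exponent becomes more negative while the prefactor $2^\ell$ is unchanged — this only helps. For step (i), I would apply Lemma~\ref{middleschoolmath}: the map $y\mapsto y\,e^{-ay^{1/2}}$ (with $y=2^\ell$, $a=C(|A|/d)^{1/2}$) is decreasing once $y\ge x_0=(2/a)^2$, and the analysis in the type-2 case already establishes that for $C\ge 3$ we have $x_0<\tfrac12<|A|/d\le 2$... wait, I need $y\ge x_0$, i.e. $2^{\ell}\ge x_0$; since $A$ is not of type 0 we have $\rho(A)>1$ hence $\ell(A)\ge 0$ hence $2^{\ell(A)}\ge 1>x_0$ provided $x_0<1$, which holds as $x_0=(2/(C(|A|/d)^{1/2}))^2\le (2/C)^2\cdot 2<1$ for $C$ large; and $\ell(B)\le\ell(A)$, so increasing from $2^{\ell(B)}$ to $2^{\ell(A)}$ (both $\ge$ the smaller of the two, and we're on the decreasing branch as long as both exceed $x_0$, which they do since both $\ge 1$... but careful: we need $2^{\ell(B)}\ge x_0$ too, which holds if $\ell(B)\ge 0$; if $B$ had $\ell(B)<0$ the argument needs a separate nudge) the value decreases.

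The main obstacle I expect is the bookkeeping around which graph ($H$ versus $H_B$) the degrees and cuts are measured in, and making the degree monotonicity $\maxdeg_{H_B}(B)\ge\maxdeg_H(A)$ airtight — this relies on $v(A)\in B$ together with the fact that the growth/trimming process only ever removes edges from $H$, so $\deg_H(v)$ is nonincreasing in time, and on $v$ being the max-degree vertex of $A$ so that $\maxdeg_H(A)=\deg_H(v)\le\deg_{H_B}(v)\le\maxdeg_{H_B}(B)$. The secondary subtlety is the edge case $\ell(B)<0$ (i.e. $\rho(B)<1$, a type-0-like situation for $B$), where the monotone-decreasing branch argument for step (i) needs to start from $x_0$ rather than from $2^{\ell(B)}$; but since in that case $2^{\ell(B)}e^{-C2^{\ell(B)}(|B|/d_B)^{1/2}}\le 2^{\ell(B)}\le 1\le 2^{\ell(A)}e^{0}$... hmm, that direction is wrong, so instead I would note $2^{\ell(B)}<1$ forces a different bound — actually the cleanest fix is to observe that when $\ell(B)<0\le\ell(A)$ we can bound $g(\ell(A),|A|,d)\le\max_{y\ge 1} y e^{-Cy(|A|/d)^{1/2}}$ which by Lemma~\ref{middleschoolmath} is attained at $y=1$ and equals $e^{-C(|A|/d)^{1/2}}\le e^{-C/\sqrt2}$, while $g(\ell(B),|B|,d_B)\ge$ some explicit positive quantity; for $C$ large enough the former is negligible against the latter. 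Everything else is routine substitution and an appeal to Lemma~\ref{middleschoolmath}.
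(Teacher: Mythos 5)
Your opening reduction is algebraically wrong, and the error is not cosmetic: it drops exactly the factor that makes the lemma nontrivial. Writing out the target inequality,
\[
cost(A)\le\frac{|B|}{|A|}\,cost(B)
\qquad\Longleftrightarrow\qquad
2^{\ell(A)}\frac{|A|^2}{k}\exp\!\Bigl(-C2^{\ell(A)}\bigl(\tfrac{|A|}{d}\bigr)^{1/2}\Bigr)
\;\le\;
2^{\ell(B)}\frac{|B|^2}{k}\exp\!\Bigl(-C2^{\ell(B)}\bigl(\tfrac{|B|}{d_B}\bigr)^{1/2}\Bigr),
\]
there is no common factor of $|B|/k$ to cancel (the left side carries $|A|/k$, the right side carries $|B|^2/(|A|k)$). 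Your proposed reduced inequality $2^{\ell(A)}e^{(\cdots_A)}\le 2^{\ell(B)}e^{(\cdots_B)}$ would, after reinstating the scalars, only yield $cost(A)\le\frac{|B|}{|A|}cost(B)$ if one also knew $|A|^2\le|B|^2$ --- but $|A|>|B|$, so the polynomial prefactor goes the \emph{wrong} way and must be defeated by the exponential. Your step (ii) ``change $(m,d')$ from $(|B|,d_B)$ to $(|A|,d)$'' only moves $m$ inside the exponent; it never confronts the $m^2$ sitting outside. This is precisely where the paper invokes Lemma~\ref{middleschoolmath} a second time with $\beta=2$, $\gamma=\tfrac12$: it first fixes the level at $\ell=\ell(A)$ and the degree at $d$ (using $d_B\ge d$ and $\ell(B)\le\ell(A)$, exactly as you do), arriving at $cost(B)\ge 2^{\ell}\frac{|B|}{k}e^{-C2^{\ell}(|B|/d)^{1/2}}$, and then compares $|A|\,cost(A)$ with $|B|\,cost(B)$ by showing $x\mapsto x^2 e^{-C2^{\ell}x^{1/2}}$ is decreasing on $[\tfrac12,\infty)\ni\{|B|/d,|A|/d\}$ for $C\ge 6$. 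That $\beta=2$ application is the missing ingredient in your plan; without it, the argument does not close.

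The secondary concern you raise about $\ell(B)<0$ can be dispatched cleanly, but not by the fixes you sketch. If $B$ were a type-0 component, its (always failing) growth step would delete every edge of $\partial_H B$ from $H$; from then on $B$ is a complete connected component of $H$, so no later growth step can ever add an edge touching $B$, and hence $v(A)\in B$ would preclude $A$ from ever existing. Therefore $B$ must have $\rho(B)>1$, so $\ell(B)\ge 0$ and $2^{\ell(B)}\ge 1\ge x_0$, which is all the first application of Lemma~\ref{middleschoolmath} needs. Your alternative patch (bounding $g(\ell(A),|A|,d)$ against a maximum over $y\ge 1$) does not compare against a concrete lower bound on $g(\ell(B),|B|,d_B)$ and so does not actually conclude. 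Finally, you assert you will ``really lean on'' $|B|\le|A|/2$; the paper's proof of this lemma uses only $|B|\le|A|$ and $d\le d_B\le 2|B|$ (the doubling is used later, in the charging argument of Lemma~\ref{T3-total}, not here).
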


\begin{proof} Let $d=\maxdeg_H(A)$ and $d'=\maxdeg_{H'}(B)$, where~$H'$ is the graph~$H$ at the time $B$ was grown.
Note that $d\le 2|A|$ and $d'\le 2|B|$, as both $A$ and $B$ are low degree components. As $v$ is a vertex of maximum degree in~$A$ and as $v\in B$, we get that $d'\ge d$. (Note that the degree of~$v$ may have decreased, which works in our favor.) Let $\ell'=\ell(B)\le \ell(A)=\ell$. Then,
\[\textstyle cost(B) \;=\; 2^{\ell'} \frac{|B|}{k}\exp \!\left(
 -C 2^{\ell'} \left( \frac{|B|}{d'} \right)^{\!\frac{1}{2}}
\right)
\;\ge\; 2^{\ell'} \frac{|B|}{k}\exp \!\left(
 -C 2^{\ell'} \left( \frac{|B|}{d} \right)^{\!\frac{1}{2}}
\right)
\;\ge\; 2^{\ell} \frac{|B|}{k}\exp \!\left(
 -C 2^{\ell} \left( \frac{|B|}{d} \right)^{\!\frac{1}{2}}
\right) \;,
\]
where the last inequality follows from Lemma~\ref{middleschoolmath}. (Let  $x=2^\ell$, $\beta=\gamma=1$ and $a=C\left( \frac{|B|}{d} \right)^{1/2}$. The maximum is attained at $x_0=(\frac{\beta}{\gamma}a)^{1/\gamma} = \frac{1}{C}\left( \frac{d}{|B|} \right)^{1/2}$. As $\frac{d}{|B|}\le 2$, we get that $x_0\le 1$ for $C\ge\sqrt{2}$.)

We next claim that
\[ \textstyle
 |A|cost(A) \;=\; 2^{\ell} \frac{|A|^2}{k}\exp \left(
 -C 2^{\ell} \left( \frac{|A|}{d} \right)^{1/2}
\right) \;\le\;
2^{\ell} \frac{|B|^2}{k}\exp \left(
 -C 2^{\ell} \left( \frac{|B|}{d} \right)^{1/2}
\right) \;\le\;
|B|cost(B)\;.
\]
The first inequality follows again using Lemma~\ref{middleschoolmath}. (Divide both sides by $d^2$. Let $x=\frac{|A|}{d}$, $\beta=2$, $\gamma=\frac{1}{2}$ and $a=C2^\ell$. Then, $x_0=(\frac{\beta}{a\gamma})^{1/\gamma} = (\frac{4}{C2^\ell})^2$. Thus, $x_0\le\frac{1}{2}$, for every $\ell\ge 0$, when, say, $C\ge 6$. Note that $\frac{1}{2}\le\frac{|B|}{d}\le\frac{|A|}{d}$.) The claim of the lemma follows.
\end{proof}

\begin{lemma}\label{T3} Suppose that the growth step of~$A$ is of type~$3$. Then there is a component $B\subset A$ whose growth step is not of type~$3$ such that
$cost(A) \leq \frac{|B|}{|A|} cost(B)$.
\end{lemma}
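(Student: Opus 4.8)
The plan is to iterate Lemma~\ref{AB}. Suppose the growth step of~$A$ is of type~$3$. Then $v=v(A)$ was contained in a previous low-degree component with level at most $\ell(A)$; let $B^{(1)}$ be such a component (say the one witnessing type~$3$, i.e., a previous component containing $v(A)$ with $\ell(B^{(1)})\le \ell(A)$), so by Lemma~\ref{AB} we have $cost(A)\le \frac{|B^{(1)}|}{|A|}cost(B^{(1)})$, equivalently $|A|\,cost(A)\le |B^{(1)}|\,cost(B^{(1)})$. If the growth step of $B^{(1)}$ is itself of type~$3$, then $v(B^{(1)})$ lies in an even earlier component $B^{(2)}\subset B^{(1)}$ with $\ell(B^{(2)})\le \ell(B^{(1)})$, and Lemma~\ref{AB} again gives $|B^{(1)}|\,cost(B^{(1)})\le |B^{(2)}|\,cost(B^{(2)})$. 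Continuing in this way we obtain a strictly decreasing chain $A\supsetneq B^{(1)}\supsetneq B^{(2)}\supsetneq\cdots$ of low-degree components with $|A|\,cost(A)\le |B^{(j)}|\,cost(B^{(j)})$ for every~$j$.

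Since each $B^{(j+1)}$ is a proper subset of $B^{(j)}$ (it is an earlier component containing the same maximum-degree vertex, hence strictly smaller because component sizes strictly increase along the process whenever a vertex recurs), the sizes $|B^{(j)}|$ strictly decrease, so the chain must terminate after at most $\log n$ steps at some component $B:=B^{(t)}$ whose growth step is \emph{not} of type~$3$. Chaining the inequalities yields $|A|\,cost(A)\le |B|\,cost(B)$, i.e.\ $cost(A)\le \frac{|B|}{|A|}cost(B)$, which is exactly the claim. (Note that a component of type~$0$ is not grown and has no associated $cost$; but type~$0$ requires $\rho(A)\le 1$, whereas all components in the chain have $\rho>1$ by the definition of type~$3$ and Lemma~\ref{AB}'s hypotheses, so $B$ is of type~$1$ or type~$2$.)

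The only real point to check is that the recursion is well-founded, i.e.\ that it actually reaches a non-type-$3$ component rather than cycling. This follows because the components form a nested (laminar) family ordered by time of creation, and each application of Lemma~\ref{AB} steps strictly backward in time to a strictly smaller component; there is no infinite strictly decreasing chain of finite sets, and in fact any fixed vertex~$v$ belongs to at most $\log n$ distinct components, so the chain has length at most $\log n$. I expect this bookkeeping — making precise that ``$v(A)$ in a previous component with smaller-or-equal level'' can be repeatedly invoked and terminates — to be the main (though minor) obstacle; the quantitative content is entirely carried by Lemma~\ref{AB}.
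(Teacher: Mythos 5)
Your proof is correct and is essentially the paper's proof: the paper phrases it as induction on the order of component creation, you unroll the same induction into an explicit chain $A\supsetneq B^{(1)}\supsetneq\cdots\supsetneq B^{(t)}$ and telescope the inequality $|A|\,cost(A)\le|B^{(j)}|\,cost(B^{(j)})$. The two formulations are interchangeable; termination follows in both cases from the fact that each step moves strictly backward in time to a strictly smaller component.

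One small correction to your parenthetical remark. You claim the terminal component $B$ cannot be of type~0 ``because all components in the chain have $\rho>1$ by the definition of type~3 and Lemma~\ref{AB}'s hypotheses.'' This does not follow: Lemma~\ref{AB} only guarantees $\ell(B)\le\ell(A)$, and since $\ell(A)$ could be $0$, this allows $\ell(B)\le 0$, i.e.\ $\rho(B)\le 1$, so the hypotheses of Lemma~\ref{AB} do not by themselves rule out type~0. The actual reason a type-0 component cannot appear in the chain is structural: when a type-0 component $B$ is processed, \emph{all} edges of $\partial_H B$ are removed from $H$, so $B$ becomes a complete connected component of $H$ (and hence of $G'$) and never participates in a growth step again; in particular none of its vertices, including $v(A)$, can belong to any later low-degree component. (The paper's proof also leaves this observation implicit.) Note also that as stated the lemma only requires $B$ to be ``not of type~3,'' so this refinement is not strictly needed to prove the lemma itself, though it matters when the costs are ultimately combined.
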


\begin{proof} By induction on the order which the components were created. The first low-degree component is not of type~$3$, which forms the basis of the induction. Suppose that $A$ is a type~3 component and that the claim holds for every component created before~$A$. By Lemma~\ref{AB}, there exists a component $B\subset A$ such that $cost(A) \leq \frac{|B|}{|A|} cost(B)$. If~$B$ is not of type~$3$, we are done. Otherwise, by the induction hypothesis, there exists a component~$C\subset B\subset A$, where $C$ is not of type~$3$, such that $cost(B) \leq \frac{|C|}{|B|} cost(C)$. We then have
\[cost(A) \;\leq\; \frac{|B|}{|A|} cost(B) \;\le\; \frac{|B|}{|A|}\frac{|C|}{|B|}cost(C) \;=\; \frac{|C|}{|A|}cost(C)\;,\]
as required.
\end{proof}

\begin{lemma}\label{T3-total}
  The total cost of all growth steps of type~$3$ is at most the total cost of all growth steps \emph{not} of type~$3$.
\end{lemma}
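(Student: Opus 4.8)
The plan is to turn Lemma~\ref{T3} into a charging scheme. By Lemma~\ref{T3}, I would fix for every type-3 growth step with component~$A$ a growth step with component $\beta(A)$ that is \emph{not} of type~$3$, with $\beta(A)\subsetneq A$ and $cost(A)\le \frac{|\beta(A)|}{|A|}\,cost(\beta(A))$. Grouping the type-3 steps according to the value of~$\beta$, it is then enough to prove that for each non-type-3 growth step with component~$B$ we have $\sum_{A:\beta(A)=B} cost(A)\le cost(B)$, since summing this over all non-type-3 steps yields $\sum_{A\text{ type }3}cost(A)=\sum_{B\text{ not type }3}\sum_{A:\beta(A)=B}cost(A)\le \sum_{B\text{ not type }3}cost(B)$, which is exactly the assertion of the lemma.

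So I would fix a non-type-3 growth step with component~$B$ and look at the set of type-3 components $A$ with $\beta(A)=B$. Every such $A$ properly contains~$B$, and since the $G'$-components arising during the process form a laminar family (this follows from $G_i\subseteq G_{i+1}$), any two of them are nested; hence they are linearly ordered, say $B\subsetneq A_1\subsetneq A_2\subsetneq\cdots\subsetneq A_t$. The crucial quantitative claim is $|A_i|\ge 2^i|B|$. Here I would invoke two facts already established in the paper: (i)~a successful growth step at least doubles the size of the component being grown (we always grow a smallest non-complete component), and $G'$-components only ever merge; and (ii)~when a growth step fails, and in particular for every type-0 step, its component becomes a complete connected component of~$H$, so no later $G'$-edge can leave it and it can never again lie inside a larger component. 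Since $B\subsetneq A_1$, fact~(ii) forces $B$'s growth step to have been successful, so $|A_1|\ge 2|B|$; likewise each $A_i$ with $i<t$ is properly contained in~$A_{i+1}$, so $A_i$ grew successfully and $|A_{i+1}|\ge 2|A_i|$. Iterating gives $|A_i|\ge 2^{i-1}|A_1|\ge 2^i|B|$.

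Finally I would combine the pieces: using $cost(\cdot)\ge 0$,
\[ \sum_{i=1}^{t}cost(A_i)\;\le\;\sum_{i=1}^{t}\frac{|B|}{|A_i|}\,cost(B)\;\le\;cost(B)\sum_{i=1}^{t}2^{-i}\;\le\;cost(B)\;, \]
which is what the grouping argument needs. I expect the only genuinely delicate point to be fact~(ii): one must check carefully that a failed (or type-0) growth step permanently seals its component, because all edges of its cut were deleted from~$H$ and~$H$ only shrinks thereafter — this is what guarantees that the chain $B\subsetneq A_1\subsetneq\cdots\subsetneq A_t$ really is made up of successful growth steps, so that the doubling applies at every link. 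The rest — laminarity of the component family and the geometric series — is routine bookkeeping.
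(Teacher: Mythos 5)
Your proof is correct and follows essentially the same charging argument as the paper: charge each type-3 step $A$ to a non-type-3 $B\subsetneq A$ via Lemma~\ref{T3}, observe that the components charged to a fixed $B$ form a nested chain with sizes growing geometrically, and sum the geometric series. You supply a more detailed justification of the doubling bound $|A_i|\ge 2^i|B|$ (via laminarity and the fact that failed steps seal their components), which the paper states without elaboration.
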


\begin{proof} By Lemma~\ref{T3}, if $A$ is of type~3, then there exists a component $B\subset A$, not of type~3, such that $cost(A) \le \frac{|B|}{|A|} cost(B)$. We thus charge the cost of~$A$ to~$B$. Let $B$ be non-type 3 component, and let $B\subset A_1\subset A_2\subset\cdots\subset A_s$ be the type 3 components whose cost is charged to~$B$. As $|A_i|\ge 2^i|B|$, we get that the total charge for~$B$ is at most $(\sum_{i\ge 1} 2^{-i})cost(B)\le cost(B)$.
\end{proof}

Combining Lemmas~\ref{T0-total}, \ref{T1-total}, \ref{T2-total} and~\ref{T3-total} we obtain the following theorem which implies Theorem~\ref{T-main}.

\begin{theorem}
The total cost of all low-degree growth steps is $O(n/k)$.
\end{theorem}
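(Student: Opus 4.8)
The plan is to assemble the final bound by combining the four partial results already proved, since the final theorem is essentially a bookkeeping statement over the partition of low-degree growth steps into types 0, 1, 2, and 3. First I would recall that Lemma~\ref{T0-total} bounds the edges removed during type~0 steps by $n/k$ directly (no charging argument needed, since type~0 components are disjoint). For types 1, 2, and 3, the relevant quantity is $cost(A)$, which by the lemma preceding Lemma~\ref{chargeall} upper-bounds the expected number of edges removed from $H$ during the growth step of any low-degree $A$ with $\rho(A)>1$. So it suffices to bound $\sum_{A} cost(A)$ over all type 1, 2, 3 steps.

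Next I would handle each type. By Lemma~\ref{T1-total}, the total cost of type~1 steps is $o(n/k)$; this came from the per-vertex charge $cost(A)/|A| = o(1/(k\log n))$ of Lemma~\ref{chargeall} together with the fact that each vertex participates in at most $\log n$ growth steps. By Lemma~\ref{T2-total}, the total cost of type~2 steps is $O(n/k)$, via the charging scheme of Lemma~\ref{T2} (charge $2^{-\ell(A)}/k$ to each vertex whose level decreased to $\ell(A)$, and sum the geometric series $\sum_{\ell\ge 0} 2^{-\ell}/k \le 2/k$ over all $n$ vertices). Finally, Lemma~\ref{T3-total} says the total cost of type~3 steps is at most the total cost of all steps \emph{not} of type~3 — i.e., at most the sum of the type~0, type~1, and type~2 contributions, which we have just bounded by $O(n/k)$. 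Adding everything up: type~0 gives $O(n/k)$, type~1 gives $o(n/k)$, type~2 gives $O(n/k)$, and type~3 gives $O(n/k)$, for a grand total of $O(n/k)$.

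I do not expect a genuine obstacle here, since all the hard work is already encapsulated in the earlier lemmas; the only care needed is to make sure the accounting is not circular. Specifically, the type~3 bound (Lemma~\ref{T3-total}) must be applied \emph{after} the type~0, 1, 2 bounds are in hand, because it charges type~3 costs onto non-type~3 steps rather than bounding them intrinsically; one should double-check that the ``non-type-3'' total it refers to does not itself secretly include any type~3 quantity (it does not — types 0, 1, 2 are by definition disjoint from type~3). The other point to be careful about is that Lemma~\ref{cutcost} and the definition of $cost(A)$ only govern the \emph{expected} number of edges removed during an individual growth step, and we must invoke the transcript-expectation identity from Section~\ref{sub-overview} — namely that $\EE[|G\setminus H_r|] = \EE_{T\sim\TT}[\sum_i \EE[X_i\mid T_i]]$ — so that a per-transcript bound of $O(n/k)$ on $\sum_i \EE[X_i\mid T_i]$ yields the desired bound on the overall expectation. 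Once that is noted, the theorem follows by straightforward summation.
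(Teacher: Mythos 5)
Your proposal is correct and follows exactly the paper's route: the paper proves this theorem simply by combining Lemmas~\ref{T0-total}, \ref{T1-total}, \ref{T2-total}, and~\ref{T3-total}, giving $n/k + o(n/k) + O(n/k) + O(n/k) = O(n/k)$, just as you outline. Your two cautionary remarks — that Lemma~\ref{T3-total} charges type-3 costs onto the already-bounded non-type-3 steps with no circularity, and that the per-transcript cost bound feeds into the transcript-expectation identity from Section~\ref{sub-overview} to yield the bound on $\EE[|G\setminus H_r|]$ — are valid and implicit in the paper's presentation.
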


As an immediate corollary of Theorem~\ref{T-main} we obtain:

\begin{corollary}\label{C-tail}
  Let $G=(V,E)$ be an arbitrary undirected $n$-vertex graph and let $k\ge c\log n$, where~$c$ is a large enough constant. Let $G'$ be a random $k$-out subgraph of~$G$. 
  Then there exists a constant $b$ such that the probability that the number of edges in~$G$ that connect different connected components of~$G'$ exceeds $\ell\cdot bn/k$ is at most $2^{-\ell}$.
\end{corollary}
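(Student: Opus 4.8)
The plan is to derive Corollary~\ref{C-tail} from Theorem~\ref{T-main} by a simple union/Markov argument applied to a disjoint decomposition of the randomness. First I would note that we cannot directly apply Markov's inequality to get a geometric tail, since Markov only yields a $1/\ell$ decay, not $2^{-\ell}$. Instead, the idea is to split the vertex set (or rather the random choices) into a few independent groups and argue that the number of inter-component edges is sub-additive, so that a bad event forces at least one group to be bad; then boosting the success probability per group gives the exponential bound.

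Concretely, I would proceed as follows. Let $W$ denote the number of inter-component edges of $G'$; by Theorem~\ref{T-main}, $\Exp[W] \le b_0 n/k$ for some constant $b_0$. The cleanest route is via a standard ``repetition'' trick: consider running the $k$-out sampling as $k$ independent rounds of $1$-out sampling (each vertex picks its $i$-th edge in round $i$, ignoring duplicates and the degree-$\le k$ technicality, which only helps). For each $j \in \{1,\dots,\ell\}$, group together $\lfloor k/\ell\rfloor$ of these rounds to obtain an independent copy of a $\lfloor k/\ell\rfloor$-out subgraph $G'_j$, all contained in $G'$. A key observation is monotonicity: if $e$ is an inter-component edge of $G'$, then $e$ is an inter-component edge of every $G'_j$ as well, since $G'_j \subseteq G'$ and being in the same component is a monotone property. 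Hence $W \le W_j$ for every $j$, where $W_j$ is the number of inter-component edges of $G'_j$. But this gives $W \le \min_j W_j$, and since the $G'_j$ are independent, $\Pr[W > t] \le \Pr[\forall j:\ W_j > t] = \prod_j \Pr[W_j > t]$. Applying Markov to each factor with $\Exp[W_j] = O(n/\lfloor k/\ell\rfloor) = O(\ell n/k)$ and choosing $t = \ell \cdot bn/k$ for a suitable constant $b$ makes each factor at most $1/2$, yielding $\Pr[W > \ell bn/k] \le 2^{-\ell}$.

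The main obstacle, and the step requiring care, is reconciling the ``$k$-out as $k$ rounds of $1$-out'' description with Definition~\ref{D-k}, in which each vertex picks a uniformly random \emph{set} of $\min\{k,\deg(v)\}$ edges rather than a sequence with possible repetitions, and in which low-degree vertices behave specially. One clean fix is to work instead with the expected $k$-out model via Lemma~\ref{lem:same-expectation}: sampling each directed edge of $v$ independently with probability $k/\max\{k,\deg(v)\}$ splits exactly as $\ell$ independent expected-$(k/\ell)$-out samples (split the probability $p_e$ across $\ell$ groups using independence of the rounds), and monotonicity $W\le\min_j W_j$ still holds. Then Theorem~\ref{T-variant} applied to each group (valid since we only need $k/\ell \ge c\log n$ when $\ell \le k/(c\log n)$, and for larger $\ell$ the bound $2^{-\ell}$ can be absorbed by taking $b$ large and using the crude bound $W \le \binom{n}{2}$) gives the result, and finally Lemma~\ref{lem:same-expectation} transfers it back to the exact $k$-out model up to adjusting constants. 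The only genuinely delicate bookkeeping is the range of $\ell$ for which the per-group expectation bound applies and ensuring the constant $b$ is chosen uniformly; both are routine once the decomposition and monotonicity are in place.
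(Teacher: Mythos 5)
Your proof is correct and takes essentially the same approach as the paper's: decompose the $k$-out sample into $\ell$ independent $k/\ell$-out sub-samples, observe by monotonicity that the number of inter-component edges of the union is at most that of each sub-sample, and apply Markov's inequality to each of the $\ell$ independent factors to get the $2^{-\ell}$ tail. You are somewhat more explicit than the paper about the coupling that realizes the decomposition and about the range of $\ell$ for which the per-sample application of Theorem~\ref{T-main} is valid, but these are details the paper glosses over rather than a difference in approach.
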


\begin{proof}
  By Theorem~\ref{T-main} there is a constant $b$ such that the expected number of inter-component edges is at most $bn/k$. By Markov's inequality, the probability that the number of inter-component edges is more than $2bn/k$ is at most~$1/2$.


  A random $k$-out subgraph can be obtained by taking the union of $\ell$ independent random $k/\ell$-out subgraphs, and adding more edges, if needed, to make sure that $k$ edges incident on each vertex were chosen.
  The number of inter-component edges with respect to the random $k$-out subgraph is clearly at most the number of such edges for each one of the $k/\ell$-out subgraph. Thus, the probability that the number of inter-component edges is more than $\ell\cdot 2bn/k$ is at most $2^{-\ell}$.
\end{proof}

It is not difficult to see that the tail bound given in Corollary~\ref{C-tail} is asymptotically tight.

\section{One-way spanning forest protocol with private randomness}\label{S-sfprivate}

As an application of the new sampling theorem we consider the following one-way communication problem. Each vertex of an input graph has a distinct ID of $O(\log n)$ bits. Each vertex knows its ID and the IDs of its neighbors. Each vertex can send a \emph{single} message to a \emph{referee}. The referee must then determine a spanning forest of the graph. How many bits does each vertex need to send?

A sketching technique of Ahn, Guha and McGregor \cite{ahn2012analyzing,ahn2012graph} (see also Gibb et al.~\cite{gibb2015dynamic}) provides a $O(\log^3 n)$-bit solution, provided that \emph{public} randomness is available. The referee can then determine a spanning forest with a constant probability. Nelson and Yu~\cite{NY2018} have recently shown that this bound is tight.

We provide the first $o(n)$-bit solution using \emph{private} randomness. More specifically, we show that each vertex only needs to send $O(\sqrt{n}\log n)$ bits, following which the referee can determine a spanning forest with constant probability. The failure probability can be made polynomially small if each vertex sends $O(\sqrt{n}\log^{3/2} n)$ bits.

In addition to the sampling theorem, we need two additional ingredients which are described next.

\subsection{XOR trick}\label{sub-xor}

The XOR trick is the basis of the sketching technique of Ahn et al.~\cite{ahn2012analyzing,ahn2012graph}. It is also used by Kapron et al.~\cite{KKM13} and Gibb et al.~\cite{gibb2015dynamic} to obtain dynamic graph connectivity algorithms. In the following we identify the name of a vertex with its ID.

Suppose that each edge $\{i,j\}$ is assigned an $\ell$-bit string \emph{name} $x(\{i,j\})$. (Edges are undirected, so $x(\eij)=x(\{j,i\})$.) Perhaps the most natural name of an edge is the concatenation of the names of its endpoints, in an appropriate order. We will usually employ, however, more `resilient' edge names, as explained in Section~\ref{sub-resilient}.

For each vertex $i\in V$, we let $X(i)=\bigoplus_{j:\{i,j\}\in E} \xij$. For any $C\subset V$, let $X(C)=\bigoplus_{i\in C} X(i)$. Recall that $\partial C$ denotes the set of edges that cross the cut $(C,V\setminus C)$.

\begin{lemma} For any $C\subset V$, we have
$ X(C) = \bigoplus_{\{i,j\}\in \partial C} \xij$. In words, $X(C)$ is the xor of the names of all the edges that cross the cut $(C,V\setminus C)$.
\end{lemma}

In particular, if $|\partial C|=1$, then $X(C)$ is the name of the single edge in the cut $(C,V\setminus C)$. This simple observation is heavily used in \cite{ahn2012analyzing,ahn2012graph,gibb2015dynamic,KKM13}.

\subsection{Resilient edge names}\label{sub-resilient}

We observed above that if $|\partial C|=1$, then $X(C)$ is the name of the single edge in the cut $(C,V\setminus C)$. What if $|\partial C|>1$? To identify the edges that cross the cut in this case, provided that there are at most $k$ edges that cross the cut we use \emph{resilient} edge names.

\begin{definition}[Resilient edge names]
A collection of edge names is said to be $r$-\emph{resilient}, if and only if, for any two subsets $A\ne B\subset E$, with $|A|,|B|\le r$ we have $X(A)\ne X(B)$, where $X(A)=\bigoplus_{\{i,j\}\in A}\xij$. Equivalently, for every $A\subset E$ with $|A|\le 2r$, we have $X(A)\ne 0$.
\end{definition}

\begin{lemma}\label{L-resilient} For every $r\ge 1$, the edges of a complete undirected graph on $n$ vertices can be given $r$-resilient edge names of length $\ell \le 4r \lg n$.
\end{lemma}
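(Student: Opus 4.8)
The plan is to construct $r$-resilient edge names via a random linear code, or more precisely, via the parity-check matrix of a code with minimum distance greater than $2r$. Assign to each of the $\binom{n}{2}$ edges a column vector $x(\{i,j\}) \in \{0,1\}^\ell$, and observe that the condition ``for every $A \subseteq E$ with $|A| \le 2r$ we have $X(A) = \bigoplus_{\{i,j\}\in A} x(\{i,j\}) \ne 0$'' is exactly the statement that no $2r$ or fewer of these column vectors are linearly dependent over $\mathbb{F}_2$; equivalently, the code whose parity-check matrix has these $\binom{n}{2}$ columns has minimum distance at least $2r+1$.

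First I would appeal to the Gilbert--Varshamov-style probabilistic argument: pick each of the $\binom{n}{2}$ vectors independently and uniformly from $\{0,1\}^\ell$. For a fixed nonempty set $A$ with $|A| \le 2r$, the probability that $X(A) = 0$ is exactly $2^{-\ell}$ (since XOR-ing in any one vector of $A$ is a bijection on $\{0,1\}^\ell$). The number of nonempty subsets of size at most $2r$ of the $\binom{n}{2}$ edges is at most $\sum_{j=1}^{2r}\binom{\binom{n}{2}}{j} \le (2r)\binom{n^2/2}{2r} \le (2r)\,(n^2)^{2r} = (2r)\, n^{4r}$ (using crude bounds). By a union bound, the probability that some such $A$ has $X(A)=0$ is at most $(2r)\,n^{4r}\cdot 2^{-\ell}$. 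Choosing $\ell = 4r\lg n + \lceil \lg(2r) \rceil + 1$ makes this bound strictly less than $1$, so a valid assignment exists; and one checks $\lceil \lg(2r)\rceil + 1 \le 4r\lg n$ comfortably for $n \ge 2$, $r \ge 1$, giving $\ell \le 4r\lg n$ as claimed (one should double-check the small cases $n=2,3$ separately or note the statement is vacuous/trivial there). Alternatively, and cleaner for an explicit bound, one could invoke BCH codes: a binary BCH code of length $n' = 2^m - 1$ correcting $r$ errors has redundancy at most $rm$, so taking $m = \lceil \lg \binom{n}{2} \rceil \le 2\lg n$ and restricting to $\binom{n}{2}$ of the coordinates gives column names of length $\le r \cdot 2\lg n \le 4r\lg n$; this is deterministic, which is nicer, though the problem statement only demands existence.

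The main obstacle — really the only subtlety — is bookkeeping the constants so the final length genuinely comes out at most $4r\lg n$ rather than, say, $4r\lg n + O(\log r)$; the probabilistic count has a lower-order additive slack that needs to be absorbed, and one has to be slightly careful that $\lg\binom{n}{2} \le 2\lg n$ (true, since $\binom{n}{2} < n^2$) and that the $+\lg(2r)$ term from the union bound is dominated. Using the BCH construction sidesteps the probabilistic slack entirely and is the route I would actually write up: state that the parity-check matrix of a length-$(2^m-1)$, distance-$(2r+1)$ binary BCH code has at most $rm$ rows, set $m = \lceil 2\lg n\rceil$ so the code length exceeds $\binom{n}{2}$, assign distinct coordinates of the code to distinct edges, and read off $\ell \le rm \le 4r\lg n$. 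The resilience property is then immediate from the minimum-distance guarantee: any nonzero $\mathbb{F}_2$-combination of at most $2r$ columns is nonzero precisely because the code has no nonzero codeword of weight $\le 2r$.
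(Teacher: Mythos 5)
Your proposal contains both the paper's actual proof and a second, deterministic route. The paper's proof of this lemma is exactly your first (Gilbert--Varshamov-style) argument: pick each $x(\{i,j\})$ uniformly in $\{0,1\}^\ell$ with $\ell = 4r\lg n$, observe $\Pr[X(A)=0]=2^{-\ell}$ for nonempty $A$, and union-bound over all nonempty $A$ of size at most $2r$. The slack you worry about is not real: you bounded $\binom{\binom{n}{2}}{j}$ by $(n^2)^j$, throwing away the $1/j!$ and the factor $2^{-j}$ from $\binom{n}{2}<n^2/2$. Keeping them, $\sum_{j=1}^{2r}\binom{\binom{n}{2}}{j} < \sum_{j=1}^{2r}\frac{(n^2/2)^j}{j!} \le n^{4r}\sum_{j\ge 1}\frac{1}{2^j j!} = n^{4r}(e^{1/2}-1) < n^{4r}$, so $2^{-\ell}\sum_{j=1}^{2r}\binom{\binom{n}{2}}{j}<1$ already at $\ell=4r\lg n$, no additive correction needed and no small-$n$ case split. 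Your BCH alternative is also valid and is in fact what the paper gives in Appendix~\ref{A-codes} as the \emph{explicit} construction (parity-check matrix of a distance-$(2r+1)$ code, columns as names); the paper keeps the probabilistic argument as the proof of the lemma since, as it notes, mere existence suffices because the names can be agreed upon in advance. One small bookkeeping point if you do write up the BCH version: take $m=\lceil\lg(\binom{n}{2}+1)\rceil$ rather than $\lceil\lg\binom{n}{2}\rceil$ to guarantee $2^m-1\ge\binom{n}{2}$ even when $\binom{n}{2}$ is a power of two.
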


\begin{proof} We use a simple probabilistic argument. Let $\ell = 4r \lg n$. For every $i<j$, let $\xij$ be a random $\ell$-bit string. All bit strings are chosen independently. For a given $A\subset E$, the probability that $X(A)=0$ is exactly $2^{-\ell}$. By the union bounds, the probability that there exists a set of at most $2r$ edges $A$ such that $X(A)=0$ is at most $2^{-\ell}\sum_{i=1}^{2r} { {n \choose 2} \choose i } < 1$.
\end{proof}

In Appendix~\ref{A-codes} we describe efficient explicit construction of $r$-resilient edge names based on linear error correcting codes. We note that for our purposes the \emph{existence} of $r$-resilient names is enough, as the vertices can agree on such names before the protocol starts.

\subsection{$O(\sqrt{n}\log n)$-bit messages using private randomness}\label{sub-private}

The $n$ vertices agree on a collection of $r$-resilient names for all potential edges, where $r=c\sqrt{n}$, for a sufficiently large constant $c$. Each edge thus has an $\ell$-bit name $\xij$, where $\ell=O(\sqrt{n}\log n)$. 

The message vertex~$i$ sends to the referee is composed of two parts:
\begin{enumerate}
\item A sample of $\sqrt{n}$ edges incident on~$i$, or all the edges incident on~$i$, if its degree is less than~$\sqrt{n}$. (Note that this corresponds exactly to the $\sqrt{n}$-out model.) This part is composed of $O(\sqrt{n}\log n)$ bits.
\item The xor of the names of all the edges incident on the vertex, i.e., $X(i)=\bigoplus_{j:\{i,j\}\in E} \xij$. The number of bits in this part is again $O(\sqrt{n}\log n)$.
\end{enumerate}

The collection of edges received by the referee is exactly a random $\sqrt{n}$-out subgraph~$G'$ of the original graph. The referee computes the connected components and a spanning forest of this subgraph. By Theorem~\ref{T-main}, with $k=\sqrt{n}$, the expected number of inter-component edges is $O(\sqrt{n})$. Thus, with probability at least $1/2$ the number of inter-component edges is at most $c\sqrt{n}$, for some constant~$c$. We assume in the following that this is the case.

For every connected component~$C$ of~$G'$, the referee computes $X(C)=\bigoplus_{i\in C} X(i)$. As the number of inter-component edges is at most~$c\sqrt{n}$, we also have $|\partial C|\le c\sqrt{n}$. As the names of the edges are $c\sqrt{n}$-resilient, the referee can infer all the edges of~$\partial C$. The referee can thus easily extend the spanning forest of~$G'$ to a spanning forest of~$G$.

The protocol described produces a spanning forest of the input graph with probability of at least~$1/2$. When it fails, the spanning forest returned
may contain edges not present in the graph, and may fail to span all connected components of the graph. By Corollary~\ref{C-tail}, the failure probability of the algorithm can be reduced to $2^{-\ell}$, for any $\ell\ge 1$, by using $\ell\cdot c\sqrt{n}$ resilient edge names, or alternatively, sending $\ell\cdot c\sqrt{n}$ edges incident on each vertex.

To get a polynomially small error probability, i.e., $n^{-\alpha}$, for some $\alpha>0$, each vertex sends $a\sqrt{n\log n}$ incident edges, and $a \sqrt{n\log n}$-resilient edge names are used, for a sufficiently large constant~$a$.

The spanning forest protocol gives a Monte-Carlo, i.e., two-sided error, protocol for checking the connectivity of a graph. Converting the protocol into a Las Vegas, i.e., one-sided error, protocol is an interesting open problem.

We note that an $\tilde{O}(\sqrt{n})$ protocol with private randomness can also be obtained without the use of the new sampling theorem. However, the procedure used by the referee to construct a spanning forest of the input graph is more complicated, and the number of bits sent by each vertex is larger by a factor of $\log n$. See Appendix~\ref{A-protocol} for the details.


\section*{Acknowledgment} The last author would like to thank Orr Fischer and Rotem Oshman for introducing him to the communication complexity problem and for many helpful discussions about it.


\appendix

\section{Resilient names and error correcting codes}\label{A-codes}

A binary $(n,k,d)$-code $\C$ is a $k$-dimensional linear subspace of $\mathbb{Z}_2^n$  such that for every $x,y\in \C$ we have $d_H(x,y)\ge d$, there $d_H(x,y)$ is the Hamming distance between the codewords $x$ and $y$. Each such code has an $(n-k)\times n$ \emph{parity check matrix} $A$ such that $x\in\C$ if and only if $Ax=0$. (The rows of~$A$ form a basis of the orthogonal subspace $\C^\perp$.)

If $A$ is the parity check matrix of an $(n,k,d)$-code $\C$, then as $0\in\C$, for every $x\in \mathbb{Z}_2^n$ with $w(x)<d$ we have $x\not\in \C$ and thus $Ax\ne 0$. (Here $w(x)=d_H(x,0)$ is the \emph{weight} of~$x$, i.e., the number of non-zero coordinates in~$x$.) In other words, the xor of any subset of less than~$d$ columns of~$A$ is non-zero. Thus, the columns of~$A$ form a collection of $\frac{d-1}{2}$-resilient names. To obtain a collection of $r$-resilient $\ell$-bit names for the edges of the complete graph we can use an $({n\choose 2},{n\choose 2}-\ell,2r+1)$-code.

For every $r$, the BCH code of length~$n$ is an $(n,n-r\log n,2r)$ code. It can thus be used to obtain an explicit $r$-resilient naming scheme with names of length $O(r\log n)$.


\section{An alternative protocol}\label{A-protocol}

In this Section we describe an alternative one-way communication  protocol with private randomness for the spanning forest problem that does \emph{not} rely on Theorem~\ref{T-main}. It is, however, slightly less efficient, and slightly more complicated. Trying to improve and simplify this algorithm led us to the discovery of Theorem~\ref{T-main}.

The protocol is similar to the protocol given in Section~\ref{sub-app}. Each vertex sends, however, $\log n$ independent samples of its edges, each edge is included in each one of the samples with probability $c/\sqrt{n}$. Each vertex also sends the xor of $c\sqrt{n}\log n$-resilient names of its incident edges,  for some large constant $c$. With high probability, each vertex sends at most $O(\sqrt{n}\log^2 n)$ bits. (Note that this is larger by a factor of $\log n$ compared to Section~\ref{sub-app}.)

The referee proceeds in $\log n$ rounds. In each round she only uses edges of the $i$-th sample which contains, in expectation, at most $c\sqrt{n}$ edges. (Note that unlike the protocol of Section~\ref{sub-app}, low degree vertices actually send shorter messages.)

In the beginning of each round the referee has a collection of components. At the start of the first round each vertex is its own component. Let $C$ be a component at the start of the $i$-th round. If there is an edge~$e$ of the $i$-th sample that connects~$C$ to a different component~$C'$, then the components~$C$ and~$C'$ are merged and~$e$ is added to the spanning forest. If no such edge is found, then we can `infer', with high probability, that the number of edges in the cut $(C,V\setminus C)$ is at most~$\sqrt{n}\log n$. (See justification in the next paragraph.) Indeed, if the size of the cut is at least~$\sqrt{n}\log n$, then the probability that none of the edges of the cut appears in the $i$-th sample is at most $(1-\frac{c}{\sqrt{n}})^{\sqrt{n}\log n} \le n^{-c}$. In this case, the referee computes $ X(C) = \bigoplus_{\{i,j\}\in \partial C} \xij$ from which she can infer the up to~$\sqrt{n}\log n$ edges of the cut. If $X(C)=0$, the referee declares $C$ to be a connected component of the input graph. Otherwise, it uses the edges returned to connect~$C$ to other components.

Let $C_1,C_2,\ldots,C_r$ be the components in the start of the $i$-th round. Suppose that $r'$ of the cut sets $\partial C_i$ are of size at least~$\sqrt{n}\log n$. The probability that the $i$-th sample fails to hit each one of these~$r_i'$ cuts is at most $r_i'\cdot n^{-c}$. Note that this is also the probability that the referee makes a mistake in the $i$-th round. If no mistake is made, then the number of components that are not complete components of the input graph decreases by a factor of at least~2 in each round, and after $\log n$ round, no such component remains. The referee outputs the spanning tree obtained. The total error probability is $(\sum r'_i) n^{-c}<n^{-(c-1)}$. Thus, choosing $c=2$ suffices to get a correct result with probability at least $1-\frac{1}{n}$.

It is important to note that the analysis above is correct as the cuts that we are trying to hit with the $i$-th sample are \emph{independent} of the $i$-th sample. (They only depend on the first $i-1$ samples.)

In the above protocol each edge is sampled independently with probability
$c/\sqrt{n}$. The \emph{same} sampling probability is used for all vertices. We have a simple example that shows that the $\log n$ samples used by the protocol are required in this case. Theorem~\ref{T-main} shows, perhaps surprisingly,
that if we sample $c\sqrt{n}$ edges from each vertex, i.e., giving low degree vertices a higher sampling probability, then the separate $\log n$ samples can be replaced by a single sample. Proving that, however, is far from easy.

\end{document}